\documentclass[submission,copyright,creativecommons]{eptcs}
\providecommand{\event}{QPL 2026} 

\usepackage{iftex}

\ifpdf
  \usepackage{underscore}         
  \usepackage[T1]{fontenc}        
\else
  \usepackage{breakurl}           
\fi

\title{Quantum theory over dual-complex numbers}
\author{Pablo Arrighi
\institute{Université Paris-Saclay, Inria, CNRS, LMF\\ 91190 Gif-sur-Yvette, France}
\email{pablo.arrighi@inria.fr}
\and
Dogukan Bakircioglu
\institute{Université Paris-Saclay, Inria, CNRS, LMF\\ 91190 Gif-sur-Yvette, France}
\email{dogukan.bakircioglu@inria.fr}
\and
Nathan Houyet
\institute{Université Paris-Saclay, Inria, CNRS, LMF\\ 91190 Gif-sur-Yvette, France}
\email{nathan.houyet@ens-paris-saclay.fr}
}

\newcommand{\titlerunning}{Quantum theory over dual-complex numbers}
\newcommand{\authorrunning}{P. Arrighi, D. Bakircioglu \& N.L. Houyet}

\hypersetup{
  bookmarksnumbered,
  pdftitle    = {\titlerunning},
  pdfauthor   = {\authorrunning},
  pdfsubject  = {EPTCS},               
}

\usepackage[english]{babel}
\usepackage[utf8]{inputenc}
\usepackage{amssymb}
\usepackage{amsmath}
\usepackage{amsthm}
\usepackage{environ}
\usepackage{thmtools}
\usepackage{braket}
\usepackage{multicol}
\usepackage{todonotes}
\usepackage{graphicx}
\usepackage{subcaption}
\graphicspath{{REFS/LorentzDirac/}}
\usepackage{geometry}
\usepackage{comment}
 
\usepackage{hyperref}
\usepackage{pgf}
\usepackage{tikz}
\usetikzlibrary{arrows,automata,positioning}
\newsavebox{\gates}
\newsavebox{\wires}
\newsavebox{\crossings}
\newsavebox{\patchuno}
\newsavebox{\patchdos}
\newsavebox{\patchtres}
\newsavebox{\patchcuatro}
\newsavebox{\patchcinco}
\newsavebox{\patchseis}
\newcommand{\ii}{i}
\newcommand{\crossingswname}[7]{%
\foreach \i in {0,...,\the\numexpr #2-1 \relax}{%
  \draw[#5,thick] (0, \i+1) node[black,#6] {#3}  -- (#1+0, \i+1)  ;
  \draw[#5,thick] (#1+0,\i+1) -- (#1+1.5,\i+1);
};
\foreach \i in {0,...,\the\numexpr #1 - 1 \relax}{%
  \draw[#5,thick] (\i+1, 0) node[black,#7] {#4} -- (\i+1, #2+0);
  \draw[#5,thick] (\i+1,#2+0) -- (\i+1,#2+1.5);
};
\foreach \i in {1,...,#1}{%
  \foreach \j in {1,...,#2}{%
    \draw[fill=white] (\i,\j) circle (0.1);
}};
}

\newtheorem{theorem}{Theorem}

\newtheorem{definition}[theorem]{Definition}
\newtheorem{proposition}[theorem]{Proposition}
\newtheorem{corollary}[theorem]{Corollary}

\newtheorem{example}[theorem]{Example}
\newtheorem{postulate}{Postulate}

\newif\iftoappendix
\toappendixtrue

\NewEnviron{proofcontent}[1]{%
  \iftoappendix
    \expandafter\ifx\csname savedproof_#1_toks\endcsname\relax
      \expandafter\newtoks\csname savedproof_#1_toks\endcsname
    \fi
    \expandafter\global\expandafter\csname savedproof_#1_toks\endcsname\expandafter=\expandafter{\BODY}%
    \par\noindent\textit{Proof in App.~\ref{app:proof:#1}.}%
  \else
    \begin{proof}\BODY\end{proof}%
  \fi
}

\newcommand{\e}{\varepsilon}
\newcommand{\N}{\mathbb{N}}
\newcommand{\R}{\mathbb{R}}
\newcommand{\C}{\mathbb{C}}
\newcommand{\D}{\mathbb{D}}
\newcommand{\DC}{\mathbb{DC}}

\newcommand{\Z}{\mathcal{O}_\e}

\newcommand{\E}{\mathcal{E}}

\newcommand{\til}{\widetilde}
\renewcommand{\bar}{\overline}

\renewcommand{\Re}{\operatorname{Re}}
\renewcommand{\Im}{\operatorname{Im}}
\newcommand{\Sig}{\operatorname{Sig}}
\newcommand{\Inf}{\operatorname{Inf}}

\newcommand{\dual}{dual-complex }
\newcommand{\Dual}{Dual-complex }

\newcommand\dstate[2]{\frac{\mathrm{d}\ket{#1}}{\mathrm{d}#2}}

\newcommand\ketbra[2]{\ket{#1}\!\bra{#2}}

\newcommand{\norm}[1]{\left\lVert#1\right\rVert}

\begin{document}
\maketitle

\begin{abstract}
We take quantum theory and replace $\mathbb{C}$ by $\mathbb{C}[\varepsilon]$ where $\e^2=0$, i.e. we extend quantum theory to the ring of dual complex numbers. The aim is to develop a common language in which to treat continuous quantum physics and discrete quantum models in a unified manner, including their symmetries. Since quantum theory is linear, introducing $\e$ is enough to model infinitesimals. A first objection to this programme is that $\mathbb{C}[\varepsilon]$ is not a field, since division by $\varepsilon$ is undefined, while quantum mechanics typically relies on division. A second objection concerns whether unitarity still makes sense given $\varepsilon^2 = 0$. Hence, the core of this work is dedicated to proving that \dual quantum theory remains fully consistent. In particular, norm is preserved at all times, and renormalization never requires dividing by an infinitesimal. An equivalence with conventional quantum theory is demonstrated: the \dual extension of a parametrized quantum operation automatically provides a linear treatment of its first-order variations. As a first example application, we provide a unified description of both the Dirac equation in the continuum and the Dirac Quantum Walk in the discrete. We establish the discrete Lorentz covariance of the latter.
\end{abstract}

\newpage

\section{Introduction}

We replace the complex field by the \dual ring in the postulates of quantum mechanics. This modification allow us to linearly approximate the action of families of quantum operations $\E_z(\rho)$ parametrized by a complex number $z$. These numbers are of the form $z + t \e$ where $\e^2 = 0$ and $z$ and $t$ are complex numbers. Intuitively $t\e$ symbolizes a small variation of $z$. \Dual numbers have found wide applications in automatic differentiation, mechanics and geometry \cite{brodsky1999, baydin2018, yang1964, rev2016}. Recent works have begun to extend our knowledge of \dual linear operators \cite{qi2024,liu2025}.\\

In this paper, we build upon these developments by formulating a self-consistent set of postulates that extend conventional quantum theory into the realm of \dual numbers. We introduce translation rules that automated mapping between the \dual and the conventional formalisms of quantum mechanics. Some of our results extend what was previously known about \dual unitary operators. These technical results may be of independent mathematical interest.\\

We argue that our extension is physical and offers a rigorous framework for a widely used but often informal technique in physics—the neglect of higher-order terms in functional expansions. Since $\e^2=0$, the algebra of \dual numbers encodes first-order approximations, making it well-suited for modelling systems where higher-order effects should be neglected. We provide an example of application of our framework where it allows for an easy derivation of Dirac equation from a Discrete time Quantum Walk (QW).\\

Beyond their technical utility, \dual numbers also provide a compelling perspective in foundational debates, particularly the tension between discrete and continuous models of physics. While discrete models are computationally convenient and visually intuitive, continuous models have been more studied and are sometimes easier to work with, especially when symmetries are concerned for instance. This raises the question whether discrete models can fully represent Nature \cite{kornyak2013, hossenfelder2012, stecker2011, chamseddine2014, chamseddine2018}.
Through our example, we show that the \dual framework has the potential to unify these two perspectives by combining a continuous base field $\C$ with an infinitesimal variation which represents a discrete step. Such a hybrid structure may facilitate smooth translations between continuous and discrete frameworks, potentially offering new tools for quantum computation, simulation and foundations.\\

\Dual numbers constitute a new option in the foundational debate about which set of numbers is needed or useful to express quantum behaviours. Quantum theory based on extension of complex numbers such as quaternion numbers has been known for a long time (\cite{finkelstein1962}) while the debate on the necessity of complex numbers over real numbers is still raging. \cite{hoffreumon2025}\\

In Sec.\ref{sec:dualnumbers}, we recall all the required notions about \dual numbers for later sections. In Sec.\ref{sec:linearalgebra}, we recall a few results about \dual linear operators and extend what is known about \dual unitary operators. We also define a notion of ordering required for our self-consistency check. In Sec.\ref{sec:postulates}, we use these results to present a set of postulates whose consistency is proved in Sec.\ref{sec:consistency}.
Finally, in Sec.\ref{sec:QCA}, we present how \dual numbers are employed to automate the continuum limit calculations of the Dirac QW \cite{ArrighiDirac,d2014derivation}, and make its discrete Lorentz covariance \cite{ArrighiLorentzCovariance} exact.

\section{Dual numbers}\label{sec:dualnumbers}

\noindent Extending real numbers with an imaginary unit $i$ with $i^2 = -1$ gives rise to the complex numbers. Complex numbers form a field. This field can be used to define a vector space. Quantum mechanics represents physical systems and their evolution within that vector space. 

{\em Notations.} As usual the set of complex numbers $\{z = a + bi \; | \; a, b \in \R \text{ and } i^2 = 0\}$ is denoted $\C$. As usual, we denote by real part ($\Re$) and imaginary part ($\Im$) the functions:
\begin{multicols}{3}
\noindent
\begin{equation}
\Re: \C \to \R: a + bi \to a
\end{equation}

\columnbreak
\noindent
\begin{equation}
\Im: \C \to \R: a + bi \to b
\end{equation}

\columnbreak
\noindent
\begin{equation}
\forall z \in \C, z = \Re(z) + \Im(z)i
\end{equation}

\end{multicols}

Dual numbers $\D$ were first introduced in \cite{clifford1871} and follow a similar idea. They are generated by extending the real numbers with an {\em infinitesimal unit} $\e$ such that $\e^2 = 0$. Multiplication and addition remain associative and commutative. Analogously, given a dual number $d\in\D$, we can take its {\em significant part} ($\Sig$) and its {\em infinitesimal part} ($\Inf$).

\begin{multicols}{3}
\noindent
\begin{equation}
\Sig: \D \to \R: a + b\e \to a
\end{equation}

\columnbreak
\noindent
\begin{equation}
\Inf: \D \to \R: a + b\e \to b
\end{equation}

\columnbreak
\noindent
\begin{equation}
\forall d \in \D, d = \Sig(d) + \Inf(d)\e
\end{equation}

\end{multicols}

Extending the real numbers with both $i$ and $\e$ gives rise to \dual numbers $\DC$. We can understand the complex, dual and \dual numbers as quotient rings:

\begin{multicols}{3}

\noindent
\begin{equation}
\C \approx \R [i]/\langle i^2+1 \rangle
\end{equation}

\columnbreak

\noindent
\begin{equation}
\D \approx \R [\e]/\langle \e^2 \rangle
\end{equation}

\columnbreak

\noindent
\begin{equation}
\DC \approx \C [\e]/\langle \e^2 \rangle.
\end{equation}
\end{multicols}

Alternatively, \dual (resp. dual) numbers can be seen as the subalgebra of the matrices of the form

\begin{equation}
\begin{pmatrix}
 a & b\\
 0 & a
\end{pmatrix},
\end{equation}

where $a, b \in \C$ (resp. $a, b \in \R$). Dual and \dual numbers are widely used in automatic differentiation as they allow to ``cut'' the Taylor expansion and analytic functions from the second term on \cite{baydin2018, rev2016}.

\begin{equation}
f(z + \e) = f(z) + \e f'(z) + \e^2 f''(z)/2 + \dots = f(z) + \e f'(z).
\end{equation}

Higher-order versions of dual-numbers were developed to deal with higher-order derivative, see \cite{szi2021, behr2019}.

\subsection{Ring operations and inversion on \dual numbers}

Let $ z_{i},t_{i} \in \C, \forall i \in \N $ and $ z_{i}+ t_{i} \e \in \DC$. The definitions of addition and multiplication follow directly from associativity, commutativity, and $\e^2=0$:

\begin{equation}
(z_1 + t_1 \e) + (z_2 + t_2 \e) := (z_1 + z_2) + (t_1 + t_2) \e
\end{equation}
\noindent \begin{equation}
(z_1 + t_1 \e) (z_2 + t_2 \e) := (z_1 z_2) + (t_1 z_2 + z_1 t_2) \e .
\end{equation}

A \dual is called {\em infinitesimal} when its significant part is zero;  the set of infinitesimal \dual numbers is denoted $\Z = \{0 + t \e | t \in \C\} = \e \C$. A number that is not infinitesimal is called {\em appreciable}. When dividing $w_1 = z_1 + t_1 \e$ by $w_2 = z_2 + t_2 \e$ there are three cases: 

\begin{enumerate}
        \item When $w_2$ is appreciable, i.e.\ $z_2 \neq 0$. In this case, division is defined as
        \begin{equation}
        \frac{z_1 + t_1 \e}{z_2 + t_2 \e} := \frac{(z_1 + t_1 \e)(z_2 - t_2 \e)}{(z_2 + t_2 \e)(z_2 - t_2 \e)} = \frac{z_1 z_2 - z_1 t_2 \e + t_1 z_2 \e}{z_2^2} = \frac{z_1}{z_2} + \frac{z_2 t_1 - z_1 t_2 }{z_2^2} \e
        \end{equation}
        and $w_2$ has a unique inverse.
        \item When $w_1$ and $w_2$ are infinitesimal, i.e. $t_1=t_2=0$. The usual constraint that $w' = \frac{w_1}{w_2} \iff w'w_2 = w_1$ leaves us with infinitely many solutions. Indeed, take $w' = z' + t' \e \in \DC$ as a candidate solution:
        \begin{equation}
        z' + t' \e = \frac{t_1 \e}{t_2 \e} \iff z' t_2 \e + t' t_2 \e^2 = t_1 \e \iff z' = t_1/t_2.
        \end{equation}
        We observe that $t'$ is unconstrained. A natural choice would be $t' = 0$, implying $\frac{t'_1 \e}{t'_2 \e} = \frac{t'_1}{t'_2}$. This is only a convenience definition, as $w_2$ has no inverse.
        \item When $w_1$ is appreciable and $w_2$ is infinitesimal. In this case we have no solution because $z' + t'\e = \frac{z_1}{t_2 \e} \iff z' t_2 \e = z_1$ which is impossible.
\end{enumerate}

Division between $w_1$ and $w_2$ is therefore defined for $w_2$ appreciable.\\

Exponentiation and rooting of \dual are given by

\begin{restatable}[\dual exponentiation]{proposition}{propexp}\label{prop:exp}
$(z + t\e)^n = z^n + n z^{n-1} t \e$
\end{restatable}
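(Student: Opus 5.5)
We argue by induction on $n \in \N$, using only the multiplication rule $(z_1 + t_1\e)(z_2 + t_2\e) = z_1 z_2 + (t_1 z_2 + z_1 t_2)\e$ given above. For $n = 0$ both sides equal $1$, reading the coefficient $n z^{n-1} t$ as $0$ (this convention also covers $z = 0$). For $n = 1$ the identity $(z+t\e)^1 = z + t\e$ is immediate.

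For the inductive step, suppose $(z+t\e)^n = z^n + n z^{n-1} t\e$. Then
\begin{equation*}
(z+t\e)^{n+1} = (z^n + n z^{n-1} t\e)(z + t\e) = z^{n+1} + \left(z^n t + n z^{n-1} t z\right)\e = z^{n+1} + (n+1) z^n t\e ,
\end{equation*}
which is the claimed formula for $n+1$. The only point needing care is $z = 0$ with $n \ge 1$: there $z^{n-1}$ must be read as $0^0 = 1$ when $n = 1$ and as $0$ otherwise. This is consistent with $(t\e)^n = 0$ for $n \ge 2$, because $\e^2 = 0$.

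As an alternative check, since $z$ and $t\e$ commute we may apply the binomial theorem:
\begin{equation*}
(z + t\e)^n = \sum_{k=0}^{n} \binom{n}{k} z^{n-k} (t\e)^k .
\end{equation*}
Every term with $k \ge 2$ vanishes because $\e^2 = 0$, leaving exactly $z^n + n z^{n-1} t\e$.

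The argument has no real obstacle. At most, to extend the statement to negative $n$ when $z \neq 0$, we would combine the case $n \ge 0$ with the inverse computed in the division rule, $(z + t\e)^{-1} = z^{-1} - t z^{-2}\e$. Raising this inverse to the power $m$ using the positive case gives the same formula with $n = -m$.
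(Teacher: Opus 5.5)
Your proof is correct. Your ``alternative check'' is exactly the paper's argument: commutativity of $\DC$ gives the binomial theorem, and every term with $k\ge 2$ vanishes because $\e^2=0$. The induction you lead with is that same computation unrolled, using only the multiplication rule. Your remarks on the $z=0$ convention and on negative $n$ via $(z+t\e)^{-1}=z^{-1}-tz^{-2}\e$ are correct additions that the paper does not address.
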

\begin{proofcontent}{propexp}
$\DC$ is a commutative ring and therefore the binomial theorem holds.
\noindent \begin{equation}
(z + t\e)^n = \sum_{k=0}^n \binom{n}{k} z^{n-k}(t\e)^k = z^n + nz^{n-1}t\e + (t\e)^2 \sum_{k=2}^n \binom{n}{k} z^{n-k}(t\e)^{k-2} = z^n + nz^{n-1}t\e
\end{equation}
\end{proofcontent}

\begin{restatable}[\dual roots]{corollary}{corroots}\label{cor:roots}
Let $z_k^{1/n}$ be the $k$-branch of the $n$-th roots of $z \neq 0$. Then the $k$-th branch of the $n$-th roots of $z + t \e$ is given by
\begin{equation}
(z + t \e)_k^{1/n} = z_k^{1/n} + \frac{t}{n(z_k^{1/n})^{n-1}} \e.
\end{equation}
\end{restatable}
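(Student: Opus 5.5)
We reduce the statement to Proposition~\ref{prop:exp}: a dual-complex number $w$ is an $n$-th root of $z+t\e$ exactly when $w^n = z+t\e$. We first determine all solutions $w = u + s\e$ of this equation. Then we check that the displayed formula picks out, for each branch $k$, exactly one of them.

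First write $w = u + s\e$ with $u,s\in\C$. By Proposition~\ref{prop:exp}, $w^n = u^n + n u^{n-1} s\,\e$. Comparing significant and infinitesimal parts (which are uniquely determined, since $\{1,\e\}$ is a $\C$-basis of $\DC$), the equation $w^n = z+t\e$ becomes the system
\begin{equation}
u^n = z, \qquad n u^{n-1} s = t .
\end{equation}
The first equation says $u$ is one of the $n$ ordinary complex $n$-th roots of $z$. Choosing the $k$-th branch gives $u = z_k^{1/n}$.

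Next, since $z\neq 0$, we have $u\neq 0$, hence $n u^{n-1}\neq 0$. The second equation therefore has the unique solution $s = t/(n(z_k^{1/n})^{n-1})$, and this yields the stated formula. Substituting back into Proposition~\ref{prop:exp} confirms that the formula is indeed a root.

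We also record why $z\neq 0$ is needed and why the roots are indexed by the same $k$ as the complex ones. Each $n$-th root of $z+t\e$ is determined by its significant part, so there are exactly $n$ roots, in bijection with the branches of $z^{1/n}$. It is natural to call the $k$-th one the $k$-th branch. If $z=0$, then for $n\ge2$ the second equation reads $0=t$. In that case there is no root if $t\neq0$, and infinitely many roots (with $s$ free) if $t=0$, so the hypothesis cannot be dropped.

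The only point requiring care is conceptual rather than computational: the branch labelling for $\DC$ is defined through the significant part. Once that convention is fixed, the argument is immediate. No division by an infinitesimal occurs, because we only divide by $n(z_k^{1/n})^{n-1}$, which is an appreciable complex number.
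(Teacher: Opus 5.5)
Your proof is correct and is essentially the paper's argument: you expand $(u+s\e)^n$ with Proposition~\ref{prop:exp}, equate significant and infinitesimal parts, take $u = z_k^{1/n}$, and solve for $s$ using $z\neq 0$. Your extra remarks go beyond the paper's proof and are correct: labelling the branches by the significant part, and noting that for $z=0$ (with $n\ge 2$) there is no root or infinitely many.
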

\begin{proofcontent}{corroots}
Suppose $(z' + t'\e)^n = z + t \e$ for some $z', t' \in \C$. By Prop.~\ref{prop:exp}, we have
\begin{equation}
(z' + t'\e)^n = z'^n + n z'^{n-1} t' \e = z + t' \e.
\end{equation}
Equating the significant and infinitesimal parts, we obtain $z'^n = z$ and $n z'^{n-1} t' = t$. Since $z \neq 0$, $z' = z_k^{1/n}$. From the second equation, we get $t' = \frac{t}{n z'^{n-1}} = \frac{t}{n(z_k^{1/n})^{n-1}}$.
\end{proofcontent}

\subsection{Automatic differentiation}\label{sec:automaticdiff}

Dual numbers are useful for automatic differentiation.  Given a complex function $f(z)$ which is holomorphic on the open set $D$, then for any $z, z_0 \in D$, we can express $f(z)$ as a sum of $f^{(n)}(z_0) := \frac{d^{n}f(z_0)}{dz^{n}}$ by means of the Taylor expansion $f(z) = \sum_{n=0}^\infty \frac{f^{(n)}(z_0) (z-z_0)^n}{n!}$. Following \cite{messelmi2015}, a \dual function $\hat{f}(z + t\e)$ is holomorphic on $D + \Z$ if and only if for any $z+t\e, z_0 \in D + \Z$,
\begin{equation}
        \hat{f}(z + t\e)= \sum_{n=0}^\infty \frac{f^{(n)}(z_0) (z + t \e -z_0)^n}{n!}
\end{equation}
 
\begin{restatable}[Automatic differentiation]{proposition}{propauto}\cite{messelmi2015}\label{pr:auto}
        Given a complex function $f(z)$ which is holomorphic on the open set $D$, there exists a unique \dual function $\hat{f}(z+t\e)$ which is holomorphic on $D + \Z$ and agrees with $f(z)$ for every $z \in D$. For any $z\in D$, $z+t\e \in D + \Z$, the function is given by
\begin{equation}\label{eq:auto}
        \hat{f}(z + t\e) = f(z) + t \e f^{(1)}(z).
\end{equation}
\end{restatable}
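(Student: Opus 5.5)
The plan is to use the definition of holomorphy on $D+\Z$ recalled just above: $\hat f$ is holomorphic there iff it equals its dual-complex Taylor series around every base point $z_0 \in D+\Z$. This gives existence, uniqueness and the explicit formula (\ref{eq:auto}) together.

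\textbf{Existence.} Define $\hat f(z+t\e) := f(z) + t\e f^{(1)}(z)$ for $z\in D$, $t\in\C$. It clearly agrees with $f$ on $D$ (take $t=0$). To check holomorphy, fix $z_0\in D$ and a point $z+t\e$ with $z$ in a disc around $z_0$ contained in $D$. Each term of the series is handled by Prop.~\ref{prop:exp}:
\begin{equation}
(z-z_0+t\e)^n=(z-z_0)^n+n(z-z_0)^{n-1}t\e .
\end{equation}
Summing term by term separates the series into a significant part and an infinitesimal part:
\begin{equation}
\sum_n \frac{f^{(n)}(z_0)}{n!}(z-z_0)^n \;+\; t\e\sum_{n\ge1}\frac{f^{(n)}(z_0)}{(n-1)!}(z-z_0)^{n-1}.
\end{equation}
The first sum is the Taylor series of $f$, hence equals $f(z)$. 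The second sum is the termwise-differentiated Taylor series, which converges on the same disc to $f^{(1)}(z)$. Therefore the series equals $\hat f(z+t\e)$.

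Convergence in $\DC$ should be taken to mean componentwise convergence of the significant and infinitesimal parts, via the identification $\DC\cong\C^2$. If one also wants base points with a nonzero infinitesimal part, $z_0+s\e$, the same computation goes through with $t$ replaced by $t-s$. One then re-expands $f^{(n)}(z_0+s\e)=f^{(n)}(z_0)+s\e f^{(n+1)}(z_0)$ and the extra terms recombine consistently.

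\textbf{Uniqueness.} Suppose $\hat g$ is another function that is holomorphic on $D+\Z$ and agrees with $f$ on $D$. Expand $\hat g$ around $z_0\in D$. The coefficients appearing in the defining series are the derivatives $f^{(n)}(z_0)$ of the restriction to $D$, and that restriction is $f$. So $\hat g(z+t\e)$ equals the same series as above, which we have just shown sums to $f(z)+t\e f^{(1)}(z)$. Hence $\hat g=\hat f$.

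\textbf{Main obstacle.} The delicate point is analytic rather than algebraic. Taylor series converge only on discs, not on all of $D$, so the argument should be local: fix $z$, pick $z_0$ close enough to $z$, and get the formula at $z$. Since the formula at $z$ does not depend on the choice of $z_0$, this suffices. I would also have to justify termwise differentiation of a power series inside its disc of convergence, which is standard complex analysis.
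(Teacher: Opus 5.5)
Your proposal is correct and follows essentially the same route as the paper: you expand $\hat f$ in its Taylor series about $z_0$, apply Prop.~\ref{prop:exp} to each power $(z-z_0+t\e)^n$, and identify the significant part as $f(z)$ and the infinitesimal part as the termwise-differentiated series, $t\e f^{(1)}(z)$. You are more careful than the paper, which calls its argument a sketch for intuition and defers to \cite{messelmi2015}: you treat existence and uniqueness separately, check base points with a nonzero infinitesimal part, and restrict to discs of convergence.
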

\begin{proofcontent}{propauto}
The following proof being simplified and only meant for intuition, see \cite{messelmi2015} for a full formal proof.\ Given $f(z) = \sum_{n=0}^\infty \frac{f^{(n)}(z_0) (z-z_0)^n}{n!}$ and $\hat{f}(z + t\e)$ agreeing with $f$ for complex values and holomorphic on $D+\Z$, we have:
\begin{equation}
\begin{split}
        \hat{f}(z + t \e) &= \sum_{n=0}^\infty \frac{f^{(n)}(z_0) (z-z_0 + t \e)^n}{n!}
                           = \sum_{n=0}^\infty \frac{f^{(n)}(z_0) (z-z_0)^n + n (z-z_0)^{n-1} t \e}{n!}
                           \quad\textrm{(by Prop.~\ref{prop:exp})}
\end{split}
\end{equation}
this can be rewritten as
\begin{equation}
\begin{split}
        \hat{f}(z + t \e) &= \sum_{n=0}^\infty \frac{f^{(n)}(z_0) (z-z_0)^n}{n!} + \sum_{n=0}^\infty \frac{f^{(n)}(z_0) n (z-z_0)^{n-1} t \e}{n!} \\
                          &= f(z) + t \e \frac{d}{dz}(\sum_{n=0}^\infty \frac{f^{(n)}(z_0) (z-z_0)^n}{n!})
                          = f(z) + t \e f^{(1)}(z).
\end{split}
\end{equation}
\end{proofcontent}

\begin{corollary}\label{corr:elsc} \Dual holomorphic extensions of exponential, logarithm, sine and cosine are given by

\begin{multicols}{2}
\noindent \begin{equation}\label{eq:expscalar}
 \exp(z + t\e) = \exp(z)(1 + t\e)
\end{equation}

\noindent \begin{equation}\label{eq:logscalar}
 \log(z + t\e) = \log(z) + \frac t z \e
\end{equation}

\columnbreak
\noindent \begin{equation}
 \sin(z + t\e) = \sin(z) + \cos(z)t\e
\end{equation}

\noindent \begin{equation}
 \cos(z + t\e) = \cos(z) - \sin(z)t\e
\end{equation}
\end{multicols}
\end{corollary}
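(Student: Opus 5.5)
The plan is to apply Prop.~\ref{pr:auto} four times, once for each function. Each of $\exp$, $\sin$, $\cos$ is entire, so we may take $D=\C$. Prop.~\ref{pr:auto} then gives a unique \dual holomorphic extension on $\C+\Z$, namely $\hat f(z+t\e)=f(z)+t\e f^{(1)}(z)$. It only remains to substitute the standard complex derivatives.

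For the exponential, $\exp^{(1)}=\exp$, so $\widehat{\exp}(z+t\e)=\exp(z)+t\e\exp(z)=\exp(z)(1+t\e)$. This matches \eqref{eq:expscalar}. For sine and cosine we use $\sin^{(1)}=\cos$ and $\cos^{(1)}=-\sin$. These give $\sin(z)+\cos(z)t\e$ and $\cos(z)-\sin(z)t\e$ directly.

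The only step needing care is the logarithm, which is not entire. We fix a branch of $\log$ that is holomorphic on an open set $D\subseteq\C\setminus\{0\}$, for instance the principal branch on $\C\setminus(-\infty,0]$. On such a $D$ we have $\log^{(1)}(z)=1/z$. Prop.~\ref{pr:auto} then yields $\log(z)+\frac{t}{z}\e$, which is \eqref{eq:logscalar}. Here $z\neq 0$ is exactly the appreciability condition, consistent with the division rules of the previous subsection.

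As a sanity check, which is optional, one can verify these formulas against earlier algebraic facts. For the exponential, $\exp(z+t\e)=\exp(z)\exp(t\e)$, and the series of $\exp(t\e)$ truncates to $1+t\e$ by Prop.~\ref{prop:exp}. For the logarithm, $\exp\bigl(\log z+\frac tz\e\bigr)=z(1+\frac tz\e)=z+t\e$, confirming that it inverts the exponential. I expect no real obstacle here; the only subtlety is stating the branch and domain for $\log$.
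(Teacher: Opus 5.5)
Your proof is correct and follows the paper's own route: the corollary is obtained directly from Prop.~\ref{pr:auto} by substituting $\exp' = \exp$, $\sin' = \cos$, $\cos' = -\sin$ and $\log'(z) = 1/z$ into $\hat{f}(z+t\e) = f(z) + t\e f^{(1)}(z)$. The paper leaves implicit your explicit choice of a holomorphic branch of $\log$ on a domain excluding $0$, which is exactly what makes $z+t\e$ appreciable, so stating it is a welcome addition.
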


\section{Linear algebra over \dual numbers}\label{sec:linearalgebra}

This section provides those properties of $\D$ and $\DC$ that are needed in order to formulate postulates of quantum theory over \dual numbers and prove their consistency. We introduce an order on dual numbers that will allow us to define valid probabilities. We motivate the choice of conjugation and norm on $\DC$, and establish some properties about them---before recalling results about diagonalization of \dual Hermitian operators. Finally, we prove that \dual unitary operators are also diagonalizable, and establish their relation to \dual anti-Hermitian operators, as exponentials and logarithms of each other.

\subsection{Ordering dual and \dual numbers}\label{subsec:ordering}

In the usual definition of a `total ring order' \cite{fuchs1963}, one needs to have that (i) for all $c$, for all $a \leq b$, $a + c \leq b + c$ and (ii) for all $a \geq 0, b \geq 0$, $ab\geq 0$. We know that while $\R$ is a totally ordered ring, this is not the case for $\C$. Indeed, if $\C$ could be totally ordered we would have either $i \geq 0$ or $-i \geq 0$ and then $(\pm i)^2 = -1 \geq 0$. It turns out that there is a very natural lexicographic total ring order for $\D$:
\begin{restatable}[\dual order]{proposition}{propdualorder}\label{pr:dualorder}
There are exactly two extensions of the natural order on $\R$ that make $\D$ a totally ordered ring, determined by the choice $\e>0$ or $\e<0$.
\end{restatable}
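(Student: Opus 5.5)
We will show that the positive cone of any such order is forced once the sign of $\e$ is fixed, and then check that each of the two resulting orders really is a total ring order. Throughout, a total ring order is one satisfying (i) and (ii) above, and it \emph{extends} the order on $\R$ when it agrees with the usual order on $\R\subset\D$.

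\textbf{Uniqueness.} Let $\leq$ be such an order. Since it is total and $\e\neq 0$, either $\e>0$ or $\e<0$. Assume $\e>0$; the other case follows by applying the ring automorphism $a+b\e\mapsto a-b\e$, which fixes $\R$ pointwise. Take $d=a+b\e$. If $a>0$, we show $d>0$: for every real $r>0$ we have $r\e>0$ by (ii), hence $\e<a/|b|$ when $b\neq0$. The obstacle is that (ii) alone only gives products of nonnegatives, so comparing an appreciable positive real with an infinitesimal needs a separate argument. Suppose $\e\geq a/|b|>0$. Multiplying by $\e\geq 0$ using (ii) gives $0=\e^2\geq (a/|b|)\e$. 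Since $(a/|b|)\e>0$ is a product of two positive elements and $\D$ has no nonzero nilpotent obstruction here, this forces $(a/|b|)\e=0$, which is false. More directly, $(a/|b|)\e\le 0$ together with $(a/|b|)\e\ge0$ gives equality, a contradiction. Hence $|b|\e<a$, so $a+b\e\geq a-|b|\e>0$ by (i). Symmetrically, $a<0$ gives $d<0$. If $a=0$, then $d=b\e$ has the sign of $b$, by (ii) applied to $\e>0$ and the real scalar $\pm b$. So the positive cone is forced to be
\[
P_+=\{a+b\e \mid a>0\ \text{or}\ (a=0,\ b>0)\},
\]
that is, the lexicographic order. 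The case $\e<0$ gives the cone obtained by flipping the sign of $b$.

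\textbf{Existence.} We check that the lexicographic cone $P_+$ defines a total ring order. First, $P_+$, $-P_+$ and $\{0\}$ partition $\D$, which gives totality. Second, $P_+$ is closed under addition: both significant parts are nonnegative, and if the resulting significant part is $0$ then both were $0$, so the infinitesimal parts are positive. Third, $P_+$ is closed under multiplication: $(a+b\e)(c+d\e)=ac+(ad+bc)\e$. If $a,c>0$ then $ac>0$. If exactly one of them is zero, say $a=0$, then $b>0$ and $c>0$, so the product is $bc\,\e\in P_+$. If both are zero, the product is $0$, which is compatible with (ii) since (ii) only requires $\ge 0$. The order clearly restricts to the usual order on $\R$. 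Applying the automorphism $b\mapsto -b$ transfers all of this to the second order.

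Finally, the two orders are distinct because they disagree on the sign of $\e$. Hence there are exactly two such extensions.
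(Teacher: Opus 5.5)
Your proof is correct and follows essentially the paper's route. You fix the sign of $\e$ via the automorphism $a+b\e\mapsto a-b\e$, show that every infinitesimal lies strictly below every positive real, deduce that the order is lexicographic, and check that the lexicographic order satisfies (i) and (ii). The main difference is how you get the key bound: you multiply $\e-a/|b|\geq 0$ by $\e\geq 0$, whereas the paper uses $(b\e-a)(b\e+a)=-a^2$, which does not need the sign of $\e$.

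In the write-up, delete the sentence about ``no nonzero nilpotent obstruction'': it is confused, since $\e$ is itself a nonzero nilpotent. Keep only your ``more directly'' argument. Also state that $P_+\cup\{0\}$, not $P_+$, is closed under multiplication.
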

\begin{proofcontent}{propdualorder}
Suppose $\leq_\D$ is a total ring order on $\D$ such that for any $a, b \in \R$, $a \leq b \iff a \leq_\D b$. Without loss of generality, assume $\e \geq_\D 0$ (since $\R[\e] \cong \R[-\e]$). 

Suppose for the sake of a contradiction that there exists $a>0$ such that $b\e \geq_\D a$. Then by (i) $b\e-a \geq_\D 0$ and $b\e+a \geq_\D 2a>0$. By (ii) again $(b\e-a)(b\e+a)\geq_\D 0$, yielding $b^2\e^2 \geq_\D a^2 $. Since $\e^2=0$ we get $a^2\leq 0$, contradicting $a^2 > 0$. Therefore, for all $a>0$, $b\e <_\D a$.

We now show that $\leq_\D$ must be lexicographic, i.e. $a_1 + b_1 \e \leq_\D a_2 + b_2 \e$ if and only if $a_1 < a_2$  or ($a_1 = a_2$ and $b_1\leq b_2$).\\
Case $a_1 < a_2$. Then $a_2 - a_1>0$. It follows that $(b_1 - b_2)\e <_\D (a_2 - a_1)$ and hence $a_1 + b_1 \e <_\D a_2 + b_2 \e$.\\
Case $a_1 > a_2$ is symmetric, yielding $a_1 + b_1 \e >_\D a_2 + b_2 \e$.\\
Case $a_1 = a_2$ and $b_1 \leq b_2$. Then $b_2 - b_1 \geq 0$ and since $\e >_\D 0$, we have $(b_2 - b_1)\e \geq_\D 0$, as a consequence $b_1 \e \leq_\D b_2 \e$ and hence $a_1 + b_1 \e \leq_\D a_2 + b_2 \e$.\\
Case $a_1 = a_2$ and $b_1 > b_2$ is similar with $b_2 - b_1 < 0$, yielding $a_1 + b_1 \e >_\D a_2 + b_2 \e$.\\
Having verified that $\leq_\D$ is the only option, let us check that it is indeed a total ring order. \\
(i) Consider $w_1=a_1 + b_1 \e \leq_\D a_2 + b_2 \e=w_2$ and some $w_3=a_3 + b_3 \e$. By definition of $\leq_\D$ we have that
\begin{align*}
a_1 < a_2 &\Rightarrow (a_1 + a_3) < (a_2 + a_3) \Rightarrow  w_1+w_3<_\D  w_2+w_3.\\
a_1 = a_2\textrm{ and }b_1 \leq b_2 &\Rightarrow (a_1 + a_3) = (a_2 + a_3)\textrm{ and }(b_1 + b_3) \leq (b_2 + b_3) \Rightarrow w_1+w_3 \leq_\D  w_2+w_3
\end{align*}
(ii) Consider $w_1=a_1 + b_1 \e \geq_\D 0$ and $w_2= a_2 + b_2 \e \geq_\D 0$, since the order is lexicographic, $a_1 \geq 0$ and $a_2 \geq 0$, and so $a_1 a_2 \geq 0$. Now, notice that
\begin{equation}
w_1w_2 \geq_\D 0 \Leftrightarrow a_1 a_2 +(a_1 b_2 + a_2 b_1)\e \geq_\D 0. 
\end{equation}
We distinguish three cases. If $a_1 a_2 > 0$, the significant part is positive, so $w_1w_2 \geq_\D 0$. If $a_1=0$ and $a_2\geq 0$ (or symmetrically $a_2=0$ and $a_1\geq 0$), then $w_1=b_1\e$ with $b_1\geq 0$, and $w_1w_2 = a_2 b_1\,\e \geq_\D 0$.
\end{proofcontent}

\subsection{Linear algebra}\label{subsec:algebra}


In \cite{messelmi2015} the authors suggests four different possible ways of conjugating dual complex numbers: 
\begin{multicols}{4}
\noindent \begin{equation}
w^* = z^* + t^*\e
\end{equation}

\columnbreak

\noindent \begin{equation}
\til{w} = z - t\e
\end{equation}

\columnbreak

\noindent \begin{equation}
\bar{\til{w}} = z^* - t^*\e
\end{equation}

\columnbreak

\noindent \begin{equation}
\bar{w} = z^*(1-t\e/z)
\end{equation}
\end{multicols}

The $\bar{w}$ option is the only one which satisfies both $\bar{w}w \in \R$ and $\Re(\Sig(w)) = \Re(\Sig(\bar{w}))$, but it is non-linear. Since our purpose is to extend quantum mechanics, which is fundamentally linear, we must give up this option and favour the first property over the second. Indeed, the two options $\til{w}$ and $\bar{\til{w}}$ lead to norms having an imaginary part, which is difficult to interpret in quantum mechanics. We are left with the $w^*$ options, which leads to norms having an infinitesimal part. Those can be understood as representing infinitesimal variations of the norm, we opt for this definition as done in \cite{qi2023}.
\begin{definition}
Let $n \in \N$. We define $\DC^n$ to be the module of $n$-dimensional vectors with numbers in $\DC$, equipped with the inner-product:
\begin{equation}
 \braket{\psi|\phi} = \sum_k \psi_k^*\phi_k
\end{equation}

\end{definition}

The norm $|w|$ of $w = z + t\e$ is defined as $|w| = \sqrt{w^* w}$. We can observe the following properties

\begin{multicols}{3}
\noindent \begin{equation}
(w_1 + w_2)^* = w_1^* + w_2^*
\end{equation}

\noindent \begin{equation}
w + w^* = 2 \Re(z) + 2 \Re(t) \e
\end{equation}

\noindent \begin{equation}\label{eq:eq0}
|w| = 0 \iff z = 0
\end{equation}

\columnbreak

\noindent \begin{equation}
(w_1  w_2)^* = w_1^* w_2^*
\end{equation}

\noindent \begin{equation}\label{eq:sqnorm}
w^*w = |z|^2 + 2 \Re(z^* t) \e \in \D
\end{equation}

\noindent \begin{equation}\label{eq:geq0}
|w| \geq 0
\end{equation}

\columnbreak

\noindent \begin{equation}
(w^*)^* = w
\end{equation}

\noindent \begin{equation}\label{eq:modul}
|w| = \sqrt{w^* w} = |z| + \frac{Re(z^* t)}{|z|} \e
\end{equation}

\end{multicols}

In particular equations \eqref{eq:geq0} and \eqref{eq:modul} entail the following propositions.

\begin{proposition}[\Dual norm]
 For every $\ket{\psi_\e} = \ket{\psi} + \e \ket{\phi} \in \DC^n$, $\braket{\psi_\e|\psi_\e} = \braket{\psi|\psi} + 2\Re(\braket{\psi|\phi}) \e$ and therefore
\begin{equation}\label{eq:norm}
 ||\ket{\psi_\e}|| = \sqrt{\braket{\psi_\e|\psi_\e}} = ||\ket\psi|| + \frac{\Re(\braket{\psi|\phi})}{||\ket\psi||}\e.
\end{equation}
We say that $\ket{\psi_\e}$ is {\em unit} norm iff $||\ket{\psi_\e}||=1$. 
\end{proposition}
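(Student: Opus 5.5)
The plan is a direct componentwise computation, reducing to the scalar facts already listed, namely \eqref{eq:sqnorm} and \eqref{eq:modul}. I write $\ket{\psi_\e}$ coordinatewise as $(\psi_k + \phi_k\e)_k$ with $\psi_k,\phi_k\in\C$, and expand the inner product $\braket{\psi_\e|\psi_\e}=\sum_k (\psi_k+\phi_k\e)^*(\psi_k+\phi_k\e)$.

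Applying \eqref{eq:sqnorm} to each coordinate $w_k=\psi_k+\phi_k\e$ gives $w_k^*w_k = |\psi_k|^2 + 2\Re(\psi_k^*\phi_k)\e$. Summing over $k$ and using additivity of $\Re$ yields
\begin{equation*}
\braket{\psi_\e|\psi_\e} = \braket{\psi|\psi} + 2\Re(\braket{\psi|\phi})\e,
\end{equation*}
which is a dual number in $\D$. Its significant part $\braket{\psi|\psi}$ is nonnegative.

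For the square root, I will use Corollary~\ref{cor:roots} with $n=2$, choosing the principal (nonnegative) branch. Setting $z=\braket{\psi|\psi}$ and $t=2\Re(\braket{\psi|\phi})$, this gives
\begin{equation*}
\sqrt{z+t\e}=\sqrt z+\frac{t}{2\sqrt z}\e = \|\ket\psi\| + \frac{\Re(\braket{\psi|\phi})}{\|\ket\psi\|}\e.
\end{equation*}
The formula is equally the vector analogue of \eqref{eq:modul}. One can also check directly that squaring the right-hand side, via Prop.~\ref{prop:exp}, reproduces $z+t\e$.

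The main obstacle is the hypothesis $z\neq 0$ needed by Corollary~\ref{cor:roots}. When $\ket\psi=0$ the inner product is $0$, since the $\e^2$ terms vanish. The square root of $0$ in $\DC$ is then not unique: any $s\e$ squares to $0$. In that case the displayed formula is ill-defined. I will therefore state the result under the implicit assumption $\ket\psi\neq0$, i.e.\ that $\ket{\psi_\e}$ is appreciable, and handle $\ket\psi=0$ separately by taking the norm to be $0$, the natural choice analogous to the paper's convention for infinitesimal division. Throughout, the branch choice is justified by the order of Prop.~\ref{pr:dualorder}, which makes the chosen root $\geq 0$, consistent with \eqref{eq:geq0}.
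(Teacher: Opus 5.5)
Your proposal is correct and follows the paper's route. The paper obtains the statement directly from the scalar identities \eqref{eq:sqnorm} and \eqref{eq:modul}, applied componentwise and summed, followed by the principal dual square root. That is what you do, with the square root taken via Corollary~\ref{cor:roots} for $n=2$. Your explicit handling of the degenerate case $\ket\psi=0$ is a refinement the paper leaves implicit: there the formula divides by zero and $\sqrt{0}$ is not unique in $\DC$. Setting the norm to $0$ in that case matches the paper's convention that infinitesimal vectors have norm zero.
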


\begin{corollary}[\Dual norm non-negativity]
 For any $\ket{\psi} \in \DC^n$, $||\ket\psi|| = \sum_{k=1}^n \psi_k^* \psi_k \geq_\D 0$.
\end{corollary}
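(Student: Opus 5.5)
The plan is to reduce the claim to the scalar non-negativity \eqref{eq:geq0} together with the fact, proved in Prop.~\ref{pr:dualorder}, that the lexicographic order on $\D$ makes it a totally ordered ring. The statement writes the norm as the sum $\sum_k \psi_k^*\psi_k$, which is really the squared norm $\braket{\psi|\psi}$. I would first prove $\braket{\psi|\psi}\geq_\D 0$, and then deduce $||\ket\psi|| = \sqrt{\braket{\psi|\psi}}\geq_\D 0$ as well.

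\textbf{Step 1: each summand is non-negative.} Write $\psi_k = z_k + t_k\e$. By \eqref{eq:sqnorm}, $\psi_k^*\psi_k = |z_k|^2 + 2\Re(z_k^* t_k)\e$. If $z_k\neq 0$, the significant part is strictly positive, so the lexicographic order gives $\psi_k^*\psi_k >_\D 0$. If $z_k = 0$, then $\psi_k^*\psi_k = 0 + 0\,\e = 0$. In both cases $\psi_k^*\psi_k \geq_\D 0$.

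\textbf{Step 2: the sum is non-negative.} Apply axiom (i) of the total ring order repeatedly, in the form $a\geq_\D 0,\ b\geq_\D 0 \Rightarrow a+b\geq_\D b \geq_\D 0$. Induction on $n$ then gives $\sum_k \psi_k^*\psi_k \geq_\D 0$. More concretely, $\braket{\psi|\psi} = \braket{\psi_0|\psi_0} + 2\Re(\braket{\psi_0|\psi_1})\e$ with $\ket{\psi} = \ket{\psi_0}+\e\ket{\psi_1}$. Its significant part $\braket{\psi_0|\psi_0}$ is $>0$ unless $\ket{\psi_0}=0$. In that case the infinitesimal part $2\Re(\braket{\psi_0|\psi_1})$ also vanishes, so the whole quantity is $0$.

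\textbf{Step 3: the square root is non-negative.} When $\ket{\psi_0}\neq 0$, \eqref{eq:norm} expresses the root as $||\ket{\psi_0}|| + \frac{\Re(\braket{\psi_0|\psi_1})}{||\ket{\psi_0}||}\e$. Its significant part is strictly positive, so it is $>_\D 0$. When $\ket{\psi_0} = 0$, the root of $0$ is taken to be $0$.

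The only delicate point I expect is this degenerate case $\ket{\psi_0}=0$. Corollary~\ref{cor:roots} excludes $z=0$, and in fact every element of $\Z$ squares to $0$, so $\sqrt{0}$ is not uniquely determined. I would handle it by adopting the natural convention $\sqrt{0}=0$, consistent with \eqref{eq:eq0}. Whatever choice is made, the claim about the sum $\braket{\psi|\psi}$ itself is unaffected, since that quantity is exactly $0$ in this case.
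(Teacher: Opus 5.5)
Your proof is correct and follows the route the paper intends. The paper gives no separate argument, only noting that the corollary is entailed by \eqref{eq:geq0} and \eqref{eq:modul}: each $\psi_k^*\psi_k$ is non-negative in the lexicographic order and non-negative elements are closed under addition, which is exactly your Steps 1--2. Your separation of $\braket{\psi|\psi}$ from its square root, together with the convention $\sqrt{0}=0$ (consistent with \eqref{eq:eq0}), also correctly repairs the statement's conflation of the norm with the squared norm.
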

Notice that the norm of a vector is infinitesimal if and only if all entries are infinitesimal. In this case the norm of the vector is zero, and the vector itself is said to be infinitesimal. Conversely, the norm of a vector is appreciable if and only if at least one entry is appreciable. In this case the vector itself is said to be appreciable. Two vectors are called orthogonal when their inner product is zero when their inner product is infinitesimal.\\

We define the adjoint as the transpose conjugate, i.e. $A^\dagger = (A^T)^*$. As usual an operator $H$ is said to be Hermitian if $H^\dagger = H$ and anti-Hermitian if $H^\dagger = -H$, for both its significant part and infinitesimal part. An operator $U$ is unitary when $U^\dagger U = I$. If follows that:\\

\begin{restatable}[\Dual unitarity]{proposition}{propunitary}\label{pr:unitary}
The following propositions are equivalent for $U_\e$ a square linear operator:
\begin{enumerate}
 \item[(a)] $U_\e$ is \dual unitary, i.e.\ $U_\e^\dagger U_\e = I$.
 \item[(b)] $U_\e$ preserves the inner product, i.e.\ $(U_\e \ket\psi)^\dagger (U_\e \ket\phi) = \braket{\psi|\phi}$ and therefore the norm $||U_\e \ket\psi|| = ||\ket\psi||$.
 \item[(c)] Rows of $U_\e$ form an orthonormal basis.
 \item[(d)] Columns of $U_\e$ form an orthonormal basis.
 \item[(e)] $U_\e = (I + i \e H)U$ where $U$ is a complex unitary and $H$ is a complex Hermitian.
\end{enumerate}
\end{restatable}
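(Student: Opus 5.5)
The plan is to write $U_\e = U + \e V$ with $U,V$ complex $n\times n$ matrices, and to reduce every condition to a pair of complex equations by separating significant and infinitesimal parts. Since $\e^2=0$, we have $U_\e^\dagger U_\e = U^\dagger U + \e(U^\dagger V + V^\dagger U)$. Hence (a) is equivalent to the system
\begin{equation*}
U^\dagger U = I, \qquad U^\dagger V + V^\dagger U = 0 .
\end{equation*}
The central equivalence will be (a)$\iff$(e), and the other items will be attached to it.

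\textbf{(a)$\Rightarrow$(e).} The first equation says that $U$ is a complex unitary, so $U^{-1}=U^\dagger$ and $UU^\dagger = I$. Set $K = V U^\dagger$, so that $V = KU$. Then $U^\dagger V = U^\dagger K U$ and $V^\dagger U = U^\dagger K^\dagger U$. The second equation therefore becomes $U^\dagger (K+K^\dagger) U = 0$, i.e.\ $K^\dagger = -K$. An anti-Hermitian complex matrix can be written $K = iH$ with $H = -iK$ Hermitian, which gives $U_\e = U + \e i H U = (I + i\e H)U$.

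\textbf{(e)$\Rightarrow$(a).} Conversely, starting from (e) one computes
\begin{equation*}
U_\e^\dagger U_\e = U^\dagger (I - i\e H)(I + i\e H)U = U^\dagger U = I ,
\end{equation*}
using $H^\dagger = H$ and $\e^2=0$. This step needs the paper's adjoint convention, $(\e A)^\dagger = \e A^\dagger$ with $\e^*=\e$.

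\textbf{(a)$\iff$(b).} If (a) holds, then $(U_\e\ket\psi)^\dagger U_\e\ket\phi = \bra\psi U_\e^\dagger U_\e\ket\phi = \braket{\psi|\phi}$ by associativity of matrix products over the commutative ring $\DC$; preservation of the norm follows from \eqref{eq:norm}. For the converse, take $\ket\psi=\ket{e_j}$ and $\ket\phi=\ket{e_k}$ for the standard basis vectors. This shows that the $(j,k)$ entry of $U_\e^\dagger U_\e$ equals $\delta_{jk}$, which is (a).

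\textbf{(a)$\iff$(d) and (a)$\iff$(c).} Item (d) is a restatement of (a): the $(j,k)$ entry of $U_\e^\dagger U_\e$ is exactly the inner product of column $j$ with column $k$. The columns then also form a basis, because $U_\e$ is invertible with inverse $U_\e^\dagger$.

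For (c) one needs $U_\e U_\e^\dagger = I$, since its entries are the inner products of conjugated rows. This is the one genuinely ring-theoretic point: over a commutative ring, a left inverse of a square matrix is also a right inverse. I would verify this directly using (e), which has already been derived from (a). From $U_\e = (I+i\e H)U$ we get
\begin{equation*}
U_\e U_\e^\dagger = (I+i\e H)UU^\dagger(I-i\e H) = I ,
\end{equation*}
because $UU^\dagger = I$ for complex unitaries. The same computation run backwards, with the roles of rows and columns exchanged, gives (c)$\Rightarrow$(a): from $U_\e U_\e^\dagger = I$, the same decomposition applied to $U_\e^\dagger$ yields $U_\e^\dagger U_\e = I$.

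The main obstacle I expect is bookkeeping rather than depth. One must handle the conjugation of $\e$ consistently; the paper takes $\e^*=\e$, so $(\e A)^\dagger = \e A^\dagger$. One must also make sure that ``orthonormal basis'' in (c) and (d) is read as pairwise inner products equal to $\delta_{jk}$ exactly, not merely up to infinitesimals. If a spanning condition is intended, it follows from the invertibility of $U_\e$.
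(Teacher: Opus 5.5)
Your proposal is correct and follows essentially the same route as the paper, which also proves (a)$\iff$(e) by splitting $U_\varepsilon^\dagger U_\varepsilon$ into significant and infinitesimal parts, and dismisses the remaining equivalences as ``analogous to the case of complex numbers''. Your version is more careful than the paper's on two points: setting $K = VU^\dagger$ justifies the paper's tacit assumption that the infinitesimal part has the form $iHU$ (this needs $U$ invertible), and deriving $U_\varepsilon U_\varepsilon^\dagger = I$ from (e) supplies the left-inverse-equals-right-inverse step needed for (c).
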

\begin{proofcontent}{propunitary}
 For point equivalence between point (e) and (a), suppose $U_\e = U + i \e HU$.
\begin{equation}
 \begin{split}
  U_\e^\dagger U_\e = I &\iff U^\dagger U + \e U^\dagger (i \e HU) + \e (i \e HU)^\dagger U = I\\
                        &\iff \begin{cases}
      U^\dagger U = I \\
      i\e U^\dagger HU - i\e U^\dagger H^\dagger U = 0
   \end{cases}\\
                        &\iff \begin{cases}
      U \text{ is a unitary.} \\
      H \text{ is a Hermitian.}
   \end{cases}\\
 \end{split}
\end{equation}
Points from (a) to (e) are easily seen to be equivalent by following a reasoning analogous to the case of complex numbers.
\end{proofcontent}

In the rest of this section, we prove that the exponential of \dual anti-Hermitian is unitary and that the logarithm of a \dual unitary is anti-Hermitian. We first introduce the exponential of a \dual operator and recall a known result about \dual Hermitian diagonalization. Next, we give two new results about \dual unitary operators.

\begin{restatable}[\Dual matrix exponentiation]{proposition}{propmatexp}\label{pr:matexp}
The \dual matrix exponential always exists.
\end{restatable}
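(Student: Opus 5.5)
The plan is to write an arbitrary \dual square matrix as $A_\e = A + \e B$ with $A, B$ complex $n\times n$ matrices. We then show that the power series $\exp(A_\e) = \sum_{k\geq 0} A_\e^k / k!$ converges entrywise in $\DC$, by reducing it to two complex series that are known to converge. Convergence in $\DC \cong \C^2$ (or in $\C^{2n^2}$ for matrices) is understood componentwise on the significant and infinitesimal parts.

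First, we expand $A_\e^k$. Since $A$ and $B$ need not commute, we cannot directly invoke Prop.~\ref{prop:exp}. Still, every product containing two or more factors $\e B$ vanishes because $\e^2=0$, so
\begin{equation*}
A_\e^k = A^k + \e \sum_{j=0}^{k-1} A^j B A^{k-1-j}.
\end{equation*}
This follows by a straightforward induction on $k$. Hence the significant part of the partial sums is $\sum_k A^k/k!$, which converges to the ordinary matrix exponential $e^A$. The infinitesimal part is $\sum_{k\geq 1} \frac{1}{k!}\sum_{j=0}^{k-1} A^j B A^{k-1-j}$.

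The main step is the convergence of this infinitesimal part. Taking any submultiplicative matrix norm, the $k$-th term is bounded by $\frac{k}{k!}\|A\|^{k-1}\|B\| = \frac{\|A\|^{k-1}}{(k-1)!}\|B\|$. The series is therefore absolutely convergent, with sum bounded by $\|B\| e^{\|A\|}$. We can also identify its value as $\int_0^1 e^{sA} B e^{(1-s)A}\,ds$, the Fréchet derivative of $\exp$ at $A$ in direction $B$. This is consistent with Prop.~\ref{pr:auto}, and in the commuting case it reduces to $e^A B$, matching Corollary~\ref{corr:elsc}. As a cross-check, one can use the matrix representation of $\DC$: $A_\e$ corresponds to the complex block matrix $\begin{pmatrix} A & B\\ 0 & A\end{pmatrix}$, which is a ring homomorphism compatible with limits. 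Its ordinary exponential exists and is again block upper-triangular with the same diagonal blocks, which gives existence immediately.

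I expect the only real obstacle to be non-commutativity in the expansion of $A_\e^k$. The identity $(z+t\e)^n = z^n + n z^{n-1} t\e$ does not transfer verbatim, so the proof must rely either on the norm estimate above or on the block-matrix embedding. I would present the block-matrix embedding as the main argument, since it avoids all estimates, and give the explicit formula as a remark.
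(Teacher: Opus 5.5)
Your proof is correct. The norm-estimate half is the paper's own argument: the paper also splits $\exp(A+B\e)$ into $e^A$ plus an infinitesimal series and bounds that series in operator norm.

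Your expansion $A_\e^k=A^k+\e\sum_{j=0}^{k-1}A^jBA^{k-1-j}$ is the correct one. The paper instead writes the infinitesimal part of $(A+B\e)^n$ as $\sum_{k=0}^{n}\binom{n}{k}A^kBA^{n-k}$, which is wrong. Already for $n=1$ it gives $AB+BA$ instead of $B$. Summed over $n$ it gives $e^ABe^A$, which in the scalar case is $te^{2z}$ rather than $te^{z}$. The paper's convergence conclusion still holds, since the true series is bounded by $\|B\|e^{\|A\|}\le\|B\|e^{2\|A\|}$.

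The genuinely different ingredient is the one you propose to lead with. The map $A+\e B\mapsto\begin{pmatrix}A&B\\0&A\end{pmatrix}$ is an injective, continuous, unital algebra homomorphism from $n\times n$ dual-complex matrices into $M_{2n}(\C)$, and its image is a closed subalgebra. So the convergent complex exponential of the block matrix pulls back to the dual-complex exponential with no estimate at all.

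This route buys two things:
\begin{itemize}
\item Generality: the argument works verbatim for any entire power series, e.g.\ $\sin$ and $\cos$.
\item The Fr\'echet-derivative identification for free: the standard block formula $\exp\begin{pmatrix}A&B\\0&A\end{pmatrix}=\begin{pmatrix}e^A&L_{\exp}(A,B)\\0&e^A\end{pmatrix}$ gives it at once, whereas the paper only asserts it in a remark after the proof.
\end{itemize}
The norm-estimate route, in exchange, stays entirely inside $\DC$ and gives both the explicit series and a quantitative bound on the infinitesimal part.
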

\begin{proofcontent}{propmatexp}
For scalar exponentiation in Eq.~\eqref{eq:expscalar}, we used the \dual extension from Eq.~\eqref{pr:auto} which works thanks to binomial theorem and, therefore, commutativity. Commutativity is notoriously not a property of linear operators in general. Without commutativity, we get
\begin{equation}
\exp(A + B \e) = \sum_{n=0}^\infty \frac 1 {n!} (A + B\e)^n = \sum_{n=0}^\infty \frac{A^n}{n!} + \e \sum_{n=0}^\infty \frac 1 {n!} \sum_{k=0}^n \binom n k A^k B A^{n-k}
\end{equation}
The significant part is the exponential of the complex matrix $A$ which is known to always converge. Using the usual operator norm, we see that the infinitesimal part also converges as
\begin{equation}
\begin{split}
||\sum_{n=0}^\infty \frac 1 {n!} \sum_{k=0}^n \binom n k A^k B A^{n-k}|| &\leq \sum_{n=0}^\infty \frac 1 {n!} \sum_{k=0}^n \binom n k ||A^k||\, ||B||\, ||A^{n-k}||\\
&\leq ||B||\sum_{n=0}^\infty \frac{2^n ||A||^n}{n!}
= ||B|| \exp(2||A||)
\end{split}
\end{equation}
\end{proofcontent}

It is interesting to note that a proof identical to that of \cite{liu2025} for the case of dual numbers can be used to show that the exponential of \dual matrix $A + B\e$ is $\exp(A) + \e L_{\exp}(A, B)$ where $L_{\exp}(A, B)$ is the Fréchet derivative of the exponential function of $A$ in the direction of $B$.\\

\begin{proposition}[\cite{qi2021}]\label{pr:qi21}
 Suppose $H$ is a Hermitian operator. Then there exists an orthonormal \dual basis  $ \{\ket{j_\e} \}$  and $n$ dual eigenvalues $\lambda_j + \e \mu_j$ of $H$ such that $H = \sum_j (\lambda_j + \e \mu_j) \ketbra{j_\e}{j_\e}$.
\end{proposition}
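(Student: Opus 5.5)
We write $H = H_0 + \e H_1$ with $H_0, H_1$ complex Hermitian matrices, and look for a unitary $U_\e = (I + \e K)U$ with $U$ complex unitary and $K$ complex anti-Hermitian (equivalently $K = iH'$ as in Prop.~\ref{pr:unitary}(e)), together with a real diagonal $\Lambda + \e M$, such that $H = U_\e(\Lambda + \e M)U_\e^\dagger$. Taking the columns of $U_\e$ as the basis $\{\ket{j_\e}\}$ gives an orthonormal basis by Prop.~\ref{pr:unitary}(d), and the identity is exactly the claimed decomposition $H = \sum_j (\lambda_j + \e\mu_j)\ketbra{j_\e}{j_\e}$.

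\emph{Expanding the equation.} Since $U_\e^\dagger = U^\dagger(I - \e K)$, comparing significant and infinitesimal parts gives
\begin{equation*}
H_0 = U\Lambda U^\dagger, \qquad H_1 = U M U^\dagger + [K, H_0].
\end{equation*}
The significant equation is the ordinary spectral theorem: choose $U$ so that $\Lambda = \mathrm{diag}(\lambda_j)$ is real. Conjugating the infinitesimal equation by $U$ and writing $\tilde H_1 = U^\dagger H_1 U$ (Hermitian) and $\tilde K = U^\dagger K U$ (anti-Hermitian), it becomes
\begin{equation*}
\tilde H_1 = M + [\tilde K, \Lambda], \qquad\text{with}\qquad [\tilde K,\Lambda]_{jk} = \tilde K_{jk}(\lambda_k - \lambda_j).
\end{equation*}

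\emph{Solving entrywise.} On the diagonal the commutator vanishes, so we set $\mu_j = (\tilde H_1)_{jj}$; this is real because $\tilde H_1$ is Hermitian. Off the diagonal with $\lambda_j \neq \lambda_k$, we set $\tilde K_{jk} = (\tilde H_1)_{jk}/(\lambda_k - \lambda_j)$. This choice is anti-Hermitian: $\overline{\tilde K_{kj}} = \overline{(\tilde H_1)_{kj}}/(\lambda_j - \lambda_k) = -\tilde K_{jk}$.

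\emph{Degenerate eigenvalues (main obstacle).} When $\lambda_j = \lambda_k$ with $j \neq k$, the commutator contributes nothing, so we need $(\tilde H_1)_{jk} = 0$. This is arranged by exploiting the freedom in choosing $U$ within each eigenspace of $H_0$. Before fixing $U$, we diagonalize the compression $P_\lambda H_1 P_\lambda$ onto each eigenspace of $H_0$ by a complex unitary acting inside that eigenspace. This leaves $H_0 = U\Lambda U^\dagger$ unchanged and makes $\tilde H_1$ diagonal on every degenerate block. The remaining entries $\tilde K_{jk}$ within a block are then free, and we set them to $0$.

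With these choices, $K = U\tilde K U^\dagger$ is anti-Hermitian, so $U_\e$ is \dual unitary by Prop.~\ref{pr:unitary}. The pairs $\lambda_j + \e\mu_j$ are the dual eigenvalues, and the columns $\ket{j_\e}$ of $U_\e$ form the required orthonormal basis. As a final check, $H\ket{j_\e} = (\lambda_j + \e\mu_j)\ket{j_\e}$ follows directly from $HU_\e = U_\e(\Lambda + \e M)$.
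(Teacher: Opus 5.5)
Your proof is correct and complete. The expansion $H_1 = UMU^\dagger + [K,H_0]$ is right, and so is the entrywise solution $\tilde K_{jk} = (\tilde H_1)_{jk}/(\lambda_k-\lambda_j)$ together with its anti-Hermiticity check. Pre-diagonalizing each compression $P_\lambda H_1 P_\lambda$ inside the eigenspaces of $H_0$ is legitimate: it leaves $U\Lambda U^\dagger = H_0$ unchanged and forces $(\tilde H_1)_{jk}=0$ whenever $j\neq k$ and $\lambda_j=\lambda_k$. Unitarity can also be checked directly, since $U_\e^\dagger U_\e = U^\dagger(I-\e K)(I+\e K)U = I$.

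The paper gives no proof of this statement; it only cites \cite{qi2021}. The closest in-paper argument is the proof of Prop.~\ref{th:specunit}, which uses essentially your mechanism. There, the first-order corrections $\alpha_{jk}$ are a coupling divided by an eigenvalue gap, they are set to zero inside degenerate eigenspaces, and orthonormality follows from the anti-Hermiticity relation $\alpha_{jk}^* = -\alpha_{kj}$.

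The one substantive difference is how the degenerate case is handled. In the paper, the vanishing of $h_{jk}$ inside a degenerate eigenspace (its ``prior remark'') is \emph{deduced} from the existence of a dual eigenbasis of $H+\e J$, that is, from this very proposition. You instead \emph{construct} the adapted basis directly by diagonalizing the compressions. Your argument is therefore self-contained, and it supplies exactly the ingredient the paper outsources to the citation (and on which its unitary spectral theorem relies).
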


We now establish that \dual unitary operators, just like \dual Hermitian operators, are diagonalizable and deduce that their logarithms are anti-Hermitian.

\begin{restatable}[Spectral theorem for \dual unitary]{proposition}{propspecunit}\label{th:specunit}
 A unitary operator is unitarily diagonalizable, i.e.\ for every \dual unitary operator $U_\e$, there is a \dual orthonormal eigenbasis $ \{\ket{j_{U_\e}} \}$  with associated \dual eigenvalues $e^{i\theta_j}(1 + \e i \mu_j)$ such that $U_\e = \sum_j e^{i\theta_j}(1 + \e i \mu_j) \ketbra{j_{U_\e}}{j_{U_\e}}$, where $\theta_j$ and $\mu_j$ are real numbers.
\end{restatable}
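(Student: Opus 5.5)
The plan is to reduce to the complex spectral theorem using characterization (e) of Prop.~\ref{pr:unitary}, and then remove the infinitesimal off-diagonal part by a first-order unitary change of basis. By (e), write $U_\e = (I + i\e H)U$ with $U$ complex unitary and $H$ complex Hermitian. Diagonalize $U = VDV^\dagger$ with $V$ complex unitary and $D=\mathrm{diag}(d_j)$, $d_j = e^{i\theta_j}$. In the basis given by $V$ we get
\begin{equation*}
V^\dagger U_\e V = D + \e M, \qquad M = i h D,\quad h := V^\dagger H V \text{ Hermitian}.
\end{equation*}
So $M_{jk} = i h_{jk} d_k$.

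I look for a \dual unitary of the form $W = I + \e X$ with $X$ complex anti-Hermitian. Such a $W$ is \dual unitary, since $W^\dagger W = I + \e(X^\dagger + X) = I$. It should satisfy $W^\dagger (D+\e M) W = D + \e\Lambda$ with $\Lambda$ diagonal. Expanding with $\e^2=0$ gives
\begin{equation*}
(I-\e X)(D+\e M)(I+\e X) = D + \e\,(M + DX - XD).
\end{equation*}
So I need the off-diagonal entries of $M + [D,X]$ to vanish, i.e.\ $M_{jk} + (d_j - d_k)X_{jk} = 0$ for $j\neq k$.

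\emph{Non-degenerate pairs} ($d_j\neq d_k$). Set $X_{jk} = i h_{jk} d_k/(d_k - d_j)$. Anti-Hermiticity must then be checked. Using $d_j^* = 1/d_j$, we have $d_j^*-d_k^* = (d_k-d_j)/(d_jd_k)$, and a short computation gives $X_{kj}^* = -X_{jk}$. Set $X_{jj}=0$.

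\emph{Degenerate pairs} ($d_j = d_k$). This is the main obstacle: $[D,X]$ has zero entries there, so these off-diagonal entries of $M$ cannot be removed by $X$. The fix is to exploit the freedom in choosing $V$ before anything else. On each eigenspace of $U$ (a block where $D$ is the scalar $d$), the corresponding block of $M$ is $i d\, h_{\mathrm{block}}$ with $h_{\mathrm{block}}$ Hermitian. I therefore replace $V$ by $V(\bigoplus_{\mathrm{blocks}} Q_{\mathrm{block}})$, where each $Q_{\mathrm{block}}$ is a complex unitary diagonalizing $h_{\mathrm{block}}$. This leaves $D$ unchanged and makes $h$ block-diagonal-within-eigenspaces diagonal. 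With this choice, for $j\neq k$ and $d_j = d_k$ we have $M_{jk}=0$, and we simply set $X_{jk}=0$.

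With these choices, $\Lambda_{jj} = M_{jj} = i h_{jj} d_j$, where $h_{jj}$ is real. The eigenvalues of $D + \e\Lambda$ are therefore $e^{i\theta_j}(1 + \e i\mu_j)$ with $\mu_j = h_{jj}\in\R$. Finally, set $\ket{j_{U_\e}}$ to be the $j$-th column of $VW$. This is a product of \dual unitaries, hence \dual unitary. By Prop.~\ref{pr:unitary}(d) its columns form a \dual orthonormal basis, and $U_\e = (VW)(D+\e\Lambda)(VW)^\dagger = \sum_j e^{i\theta_j}(1+\e i\mu_j)\ketbra{j_{U_\e}}{j_{U_\e}}$, as claimed.
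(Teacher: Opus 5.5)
Your proof is correct, and its computational core is the paper's. The $j$-th column $\ket{j_0}+\varepsilon\sum_k X_{kj}\ket{k_0}$ of $VW$ is exactly the paper's eigenvector $\ket{j_0}+\varepsilon\ket{\tilde{j_1}}$: your $X_{kj}=ih_{kj}d_j/(d_j-d_k)$ is the paper's $\alpha_{jk}=i\lambda_j h_{kj}/(\lambda_j-\lambda_k)$, and your anti-Hermiticity check on $X$ is its identity $\alpha_{jk}^*=-\alpha_{kj}$.

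The routes differ in two respects.

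First, in handling degenerate eigenvalues. The paper writes $U=\exp(iH)$ and calls the perturbation $J$ (your $H$). It then applies the cited dual-complex Hermitian spectral theorem (Prop.~\ref{pr:qi21}) to the auxiliary operator $H+\varepsilon J$, and the infinitesimal part of that eigen-equation forces $h_{jk}=0$ for $j\neq k$ with $\theta_j=\theta_k$. You obtain the same property directly, by diagonalizing the compression of the perturbation on each eigenspace of $U$. This is more elementary and needs no logarithm of $U$. It also avoids a point the paper leaves implicit: its remark is derived for $\theta_j=\theta_k$ but used for $\lambda_j=\lambda_k$, which requires choosing $H$ with spectrum in a window of length less than $2\pi$.

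Second, you cast the correction as conjugation by the dual-complex unitary $W=I+\varepsilon X$. This gives you orthonormality of the eigenbasis and the identity $U_\varepsilon=\sum_j e^{i\theta_j}(1+\varepsilon i\mu_j)\ketbra{j_{U_\varepsilon}}{j_{U_\varepsilon}}$ for free. The paper instead checks $\braket{j_{U_\varepsilon}|k_{U_\varepsilon}}=\delta_{jk}$ and re-sums the spectral decomposition to $U+i\varepsilon JU$ by hand.

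In exchange, the paper's route mainly buys reuse of an existing theorem in place of a self-contained argument.
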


\begin{proofcontent}{propspecunit}
Let $U_\e = U + i\e J U$ be a \dual unitary where $U = \exp(iH)$ is a complex unitary operator and $H$, $J$ are Hermitian operators. We are going to build an eigenbasis $\ket{j_{U_\e}}$ for $U_\e$ from an eigenbasis of $H_{\e} = H + \e J$.

$[H_\e]$. Indeed, by Prop.~\ref{pr:qi21}, we have that $H_{\e}$ is diagonalizable. Let $\ket{j_\e} = \ket{j_0} + \e \ket{j_1}$ be an orthonormal eigenbasis with the eigenvalues $\theta_j + \e \mu_j$ where $\theta_j, \mu_j \in \R$, in which $ H_{\e}$ is diagonal. Then, we have the following eigenvalue equation,

\begin{equation}\label{eq:Heigvalue}
   \begin{split}
        (H + \e J) (\ket{j_0} + \e \ket{j_1}) = (\theta_j + \e \mu_j)(\ket{j_0} + \e \ket{j_1})\\
        \iff H \ket{j_0} + \e( J \ket{j_0} + H \ket{j_1} ) = \theta_j \ket{j_0} + \e( \theta_j \ket{j_1} + \mu_j \ket{j_0})
   \end{split}  
\end{equation}

From the significant part of Eq.~\eqref{eq:Heigvalue}, we find that $\ket{j_0}$ forms an eigenbasis for $H$. Then we can uniquely express $J$ in the $ \{\ket{j_0} \}$ eigenbasis, i.e. $J = \sum_{jk} h_{kj} \ketbra{k_{1}}{j_{1}}$ where $h_{kj}^* = h_{jk}$. For the infinitesimal part of Eq.~\eqref{eq:Heigvalue}, we see that

\begin{equation}
 H \ket{j_1} + J \ket{j_0} = \theta_j \ket{j_1} + \mu_j \ket{j_0} \iff (H - \theta_j I)\ket{j_1} = (\mu_j I - J) \ket{j_0}
\end{equation}

A prior remark is that this implies that for every $j \neq k$, 
\begin{equation}
 \bra{k_0}(H - \theta_j I)\ket{j_1} = \bra{k_0}(\mu_j I - J) \ket{j_0} \iff  (\theta_k - \theta_j)\braket{k_0|j_1} = -h_{kj},
\end{equation}
which implies that for each $j \neq k$ such that $\theta_j = \theta_k$, $h_{jk} = 0$.\\

$[U_\e]$. With this prior remark in mind, we go back to constructing an orthonormal eigenbasis for $U_\e$. First, $U = \exp{(i H)}$ can be again be diagonalized in the same orthonormal basis $\{\ket{j_0} \}$, with corresponding eigenvalues $ \lambda_{j} = e^{i \theta_{j}}$. Hence, what is left is to find is $\{\ket{\tilde{j_1}} \}$ and $ \{\sigma_{j}\}$ satisfying the eigenvalue equation:
\begin{equation}\label{eq:eigeneq}
    U_{\e} ( \ket{j_0} + \e \ket{\tilde{j_1}}) =  ( \lambda_{j} + \e \sigma_{j}) ( \ket{j_0} + \e \ket{\tilde{j_1}}).
\end{equation}
and such that the $\ket{j_{U_\e}} = \ket{j_0} + \e \ket{\tilde{j_1}}$ are orthonormal.\\
Second we analyze the infinitesimal part of the equation above. We obtain
\begin{equation}
 (U - \lambda_j I)  \ket{\tilde{j_1}} = (\sigma_j I - iJU) \ket{j_0}\label{eq:infinitesimaleigeneq}
\end{equation}
Since $U = \sum_{k} \lambda_k \ketbra{k_0}{k_0}$, the LHS must necessarily be orthogonal to $\ket{j_0}$ which means that
\begin{equation}
 \bra{j_0} (\sigma_j I - iJU) \ket{j_0} = 0
 \iff \sigma_j = i \lambda_j h_{jj} \\
\end{equation}
Expressed in terms of $\alpha_{jk} := \braket{k_0|\tilde{j_1}}$, Eq.~\eqref{eq:infinitesimaleigeneq} be restated as
\begin{equation}\label{eq:alphj3}
\begin{split}
 \bra{k_0} (U - \lambda_j I)  \ket{\tilde{j_1}} = \bra{k_0} (\sigma_j I - iJU) \ket{j_0}
 \iff& (\lambda_k - \lambda_j) \alpha_{jk} = \delta_{kj} \sigma_j - i \lambda_j h_{kj}
\end{split}
\end{equation}

In the case $\lambda_j \neq \lambda_k$, this means $\alpha_{jk} = \frac{i \lambda_j h_{kj}}{\lambda_j - \lambda_k}$. In the case $j \neq k$ but $\lambda_j = \lambda_k$, our prior remark showed that $h_{jk} = 0$, hence any $\alpha_{jk} \in \C$ is a solution. The same goes for every $\alpha_{jj}$. Choosing $\alpha_{jk} = 0$ when $\lambda_j = \lambda_k$ therefore defines a valid eigenbasis $\ket{j_{U_\e}} = \ket{j_0} + \e \ket{\tilde{j_1}}$ for $U_\e$ with eigenvalues $e^{i\theta_j}(1+i\e h_{jj})$. 

$[\textrm{Orthonormality}]$. We are left to show that $\ket{j_{U_\e}}$ forms an orthonormal basis. With our definition, when $\lambda_j = \lambda_k$, $\alpha_{jk}^* = 0 = -\alpha_{kj}$. When $\lambda_j \neq \lambda_k$,
\begin{gather*}
\alpha_{jk}^* = \frac{-i \lambda_j^* h_{kj}^*}{\lambda_j^* - \lambda_k^*} = \frac{-i \lambda_j^{-1} h_{jk}}{\lambda_j^{-1} - \lambda_k^{-1}} = \frac{-i \lambda_k h_{jk}}{\lambda_k - \lambda_j} = -\alpha_{kj}\\
\implies \braket{j_{U_\e}|k_{U_\e}} = \braket{j_0|k_0} + \e \braket{\tilde{j_1}|k_0} + \e \braket{j_0|\tilde{k_1}} = \delta_{jk} + \e (\alpha_{kj}^* + \alpha_{jk}) = \delta_{jk}
\end{gather*}

$[\textrm{Diagonalization recap.}]$. By construction we have $\theta_j, h_{jj} \in \R$ and
\begin{equation}
\begin{split}
\sum_j \lambda_j (\ketbra{j_0}{\tilde{j_1}} + \ketbra{\tilde{j_1}}{j_0})
&= \sum_{jk} \lambda_j (\alpha_{jk}^* \ketbra{j_0}{k_0} + \alpha_{jk} \ketbra{k_0}{j_0}) \\
&= - \sum_{\substack{jk\\ \lambda_j \neq \lambda_k}} i h_{jk}^* \frac{\lambda_k \lambda_j}{\lambda_k - \lambda_j} \ketbra{j_0}{k_0} + \sum_{\substack{jk\\ \lambda_j \neq \lambda_k}} i h_{kj} \frac{\lambda_j^2}{\lambda_j - \lambda_k} \ketbra{k_0}{j_0} \\
&= - \sum_{\substack{jk\\ \lambda_j \neq \lambda_k}} i h_{kj} \frac{\lambda_j \lambda_k}{\lambda_j - \lambda_k} \ketbra{k_0}{j_0} + \sum_{\substack{jk\\ \lambda_j \neq \lambda_k}} i h_{kj} \frac{\lambda_j^2}{\lambda_j - \lambda_k} \ketbra{k_0}{j_0} \\
&= \sum_{\substack{jk\\ \lambda_j \neq \lambda_k}} i h_{kj} \lambda_j \frac{\lambda_j - \lambda_k}{\lambda_j - \lambda_k} \ketbra{k_0}{j_0} = \sum_{\substack{jk\\ \lambda_j \neq \lambda_k}} i h_{kj} \lambda_j \ketbra{k_0}{j_0} \\
&= \sum_{\substack{jk\\ j \neq k}} i h_{kj} \lambda_j \ketbra{k_0}{j_0} \quad\textrm{by the prior remark.}
\end{split}
\end{equation}
This gives us
\begin{equation}
\begin{split}
\sum_j e^{\theta_j}(1 + \e i h_{jj}) \ketbra{j_{U_\e}}{j_{U_\e}}
&= \sum_j \lambda_j \ketbra{j_0}{j_0}
+ \e i \lambda_j h_{jj} \ketbra{j_0}{j_0}
+ \e \lambda_j (\ketbra{j_0}{\tilde{j_1}} + \ketbra{\tilde{j_1}}{j_0}) \\
&= U + \e i \sum_j h_{jj} \lambda_j \ketbra{j_0}{j_0} + \e \sum_{\substack{jk\\ j \neq k}} i h_{kj} \lambda_j \ketbra{k_0}{j_0} \\
&= U + i \e \sum_{jk} h_{kj} \lambda_j \ketbra{k_0}{j_0} \\
&= U + i \e (\sum_{jk} h_{kj} \ketbra{k_0}{j_0})(\sum_j \lambda_j \ketbra{j_0}{j_0})\\
&= U + i \e J U
\end{split}
\end{equation}
\end{proofcontent}

Thanks to the diagonalization of unitary operators, we can establish the following result.

\begin{restatable}[\Dual unitary as exponential of anti-Hermitian]{proposition}{prophermunit}\label{pr:hermunit}
The exponential of anti-Hermitian linear operator is a unitary. Moreover, every unitary $U_\e$ is of the form $\exp(iH_{\e})$ where $H_{\e}$ is dual-complex Hermitian. $H_{\e}$ is unique modulo $2\pi$.
\end{restatable}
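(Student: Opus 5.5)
We prove the three claims in turn, relying on Prop.~\ref{pr:qi21} for the first and on Prop.~\ref{th:specunit} for the second and third.

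\emph{Exponential of anti-Hermitian is unitary.} Let $A_\e = iH_\e$ with $H_\e = H + \e J$ Hermitian, which is exactly the anti-Hermitian case. By Prop.~\ref{pr:qi21}, write $H_\e = \sum_j (\lambda_j + \e\mu_j)\ketbra{j_\e}{j_\e}$ in an orthonormal \dual basis. Set $V_\e = \sum_j \ketbra{j_\e}{e_j}$, where $e_j$ is the canonical basis. Orthonormality gives $V_\e^\dagger V_\e = I$, so $V_\e$ is unitary by Prop.~\ref{pr:unitary}, and hence also $V_\e V_\e^\dagger = I$ since it is square. Then
\[
(iH_\e)^n = V_\e \, \mathrm{diag}\big((i\lambda_j + i\e\mu_j)^n\big) V_\e^\dagger .
\]
Summing the series, which converges by Prop.~\ref{pr:matexp}, gives
\[
\exp(iH_\e) = \sum_j \exp(i\lambda_j + i\e\mu_j)\ketbra{j_\e}{j_\e} = \sum_j e^{i\lambda_j}(1+i\e\mu_j)\ketbra{j_\e}{j_\e},
\]
using Cor.~\ref{corr:elsc}. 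Each eigenvalue satisfies $|e^{i\lambda_j}(1+i\e\mu_j)|^2 = 1 + 2\Re(i\mu_j)\e = 1$. Unitarity of $\exp(iH_\e)$ then follows from orthonormality of $\{\ket{j_\e}\}$.

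A more direct alternative avoids diagonalization. Write
\[
\exp(iH + i\e J) = e^{iH} + \e\, L_{\exp}(iH, iJ)
\]
and check the infinitesimal part of $U_\e^\dagger U_\e$. This requires the Fréchet-derivative identity, so the spectral route is cleaner.

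\emph{Existence of the logarithm.} Apply Prop.~\ref{th:specunit} to $U_\e$:
\[
U_\e = \sum_j e^{i\theta_j}(1+\e i\mu_j)\ketbra{j_{U_\e}}{j_{U_\e}}
\]
with an orthonormal basis and real $\theta_j,\mu_j$. Define
\[
H_\e := \sum_j (\theta_j + \e\mu_j)\ketbra{j_{U_\e}}{j_{U_\e}}.
\]
This operator is Hermitian, because its coefficients are real dual numbers and $(\ketbra{a}{a})^\dagger = \ketbra{a}{a}$. By the computation of the previous paragraph, $\exp(iH_\e) = U_\e$, since $\exp(i\theta_j + i\e\mu_j) = e^{i\theta_j}(1+i\e\mu_j)$ by Eq.~\eqref{eq:expscalar}.

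\emph{Uniqueness modulo $2\pi$.} This is the main obstacle. Suppose $\exp(iH_\e) = \exp(iH'_\e) = U_\e$. The significant parts give $e^{iH} = e^{iH'} = U$, so by the complex theory $H$ and $H'$ are functions of the spectral projectors of $U$, determined up to $2\pi$ shifts of eigenvalues. Fix a branch, i.e.\ fix the eigenvalues $\theta_j$ within each eigenspace of $U$; then $H = H'$.

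For the infinitesimal part, the proof of Prop.~\ref{th:specunit} shows that
\[
U_\e = U + i\e J U \quad\text{when } H_\e = H + \e J.
\]
Here $J$ must be read off carefully: the proof actually uses $\exp(i(H+\e J)) \ne (I+i\e J)e^{iH}$ in general. I would therefore redo the comparison in the common eigenbasis $\{\ket{j_0}\}$ of $H$ and $U$. Using Prop.~\ref{pr:matexp}, the infinitesimal part of $\exp(iH+i\e J)$ has matrix entries
\[
i J_{kj}\,\frac{e^{i\theta_k}-e^{i\theta_j}}{i(\theta_k-\theta_j)} \quad (\theta_k\ne\theta_j), \qquad i J_{kj}\, e^{i\theta_j} \quad (\theta_k=\theta_j).
\]
This is the divided-difference formula for the Fréchet derivative. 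The map $J \mapsto L_{\exp}(iH, iJ)$ is injective precisely when no two distinct $\theta$'s differ by a nonzero multiple of $2\pi$, because then every divided difference is nonzero. Choosing the branch so that equal $e^{i\theta}$ correspond to equal $\theta$, which is the meaning of ``modulo $2\pi$'', ensures this. Hence $J = J'$ for a fixed branch, and $H_\e$ is unique modulo $2\pi$.
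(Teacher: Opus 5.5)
Your proof is correct. The existence step matches the paper: diagonalize $U_\e$ via Prop.~\ref{th:specunit} and take $\log\bigl(e^{i\theta_j}(1+i\e\mu_j)\bigr)=i\theta_j+i\e\mu_j$ on each eigenvalue. The other two steps take different routes from the paper.

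For the first claim, the paper uses no spectral theory. It notes that $\exp(iH_\e)^\dagger=\exp(-iH_\e)$ termwise, and that $\exp(-iH_\e)\exp(iH_\e)=\exp(0)=I$ because the exponents commute. Your detour through Prop.~\ref{pr:qi21}, together with the fact that a one-sided inverse of a square matrix over the commutative ring $\DC$ is two-sided, is valid but heavier. It does give you the explicit spectral form of $\exp(iH_\e)$, which you then reuse for existence. The ``direct alternative'' you discard is not the most direct one; the paper's two-line identity is.

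For uniqueness, the paper argues only eigenvalue by eigenvalue: the dual logarithms of $e^{i\theta_j}(1+i\e\mu_j)$ are $z+i\mu_j\e$ with $e^z=e^{i\theta_j}$. This tacitly assumes that any competing Hermitian logarithm is diagonal in the same dual eigenbasis $\{\ket{j_{U_\e}}\}$. Your operator-level argument closes that gap. You first fix $H$ by a branch choice. You then use the divided-difference form of $L_{\exp}(iH,iJ)$ to show that $J\mapsto L_{\exp}(iH,iJ)$ is injective exactly when no two distinct eigenvalues of $H$ differ by a nonzero multiple of $2\pi$. This also makes ``modulo $2\pi$'' precise, and the condition is genuinely needed: for $H=\mathrm{diag}(0,2\pi)$, every Hermitian off-diagonal $J$ gives $\exp(i(H+\e J))=I$.

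Two presentational fixes are needed:
\begin{enumerate}
\item $H$ is a function of the spectral projectors of $U$ only once that branch condition is imposed. Impose it before concluding $H=H'$.
\item Do not cite Prop.~\ref{pr:matexp} for your matrix entries. The expansion displayed there has the wrong infinitesimal part. The correct infinitesimal part of $(A+\e B)^n$ is $\sum_{k=0}^{n-1}A^kBA^{n-1-k}$, and that is what yields your divided differences.
\end{enumerate}
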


\begin{proofcontent}{prophermunit}
To prove the first statement, let $H_{\e}$ be a Hermitian operator. Then we have

\begin{equation}
 \exp(iH_{\e})^\dagger = (\sum_{n=0}^\infty \frac{(iH_{\e})^n}{n!})^\dagger = \sum_{n=0}^\infty \frac{(-iH_{\e}^\dagger)^n}{n!} = \exp(-iH_{\e})
\end{equation}

This shows us that the adjoint of the exponential of $iH$ is its inverse:

\begin{equation}
 \exp(iH_{\e})^\dagger \exp(iH_{\e}) = \exp(-iH_{\e}) \exp(iH_{\e}) = \exp(-iH_{\e}+iH_{\e}) = \exp(0) = I
\end{equation}

For the second statement, let $U_\e = \sum_j (\lambda_j + \e \sigma_j) \ketbra{j_{U_\e}}{j_{U_\e}}$ be a \dual unitary operator diagonalized as in Prop.~\ref{th:specunit}. We can write $\lambda_j + \e \sigma_j = e^{i\theta_j}(1 + i\e\mu_j)$ where $\theta_j, \mu_j \in \R$. Therefore, $\log(U_\e) = \sum_j \log(\lambda_j + \e \sigma_j)\ketbra{j_{U_\e}}{j_{U_\e}} = \sum_j (i\theta_j + \e i\mu_j)\ketbra{j_{U_\e}}{j_{U_\e}}$ (by Eq.~\eqref{eq:logscalar}) which is indeed anti-Hermitian.

Uniqueness modulo $2\pi$ can be seen from the fact that the other solutions $z+t\e$ of $\log(e^{i\theta_j}(1 + i\e\mu_j))$, i.e. $z,t \in \C$ such that $\exp(z + t\e) = e^{i\theta_j}(1 + i\e\mu_j)$ are those such that $e^{i\theta_j} = e^{z}$ and $t = i \mu_j$ by using Eq.~\eqref{eq:expscalar}.

\end{proofcontent}

Another result is required to extend the measurement postulate. We say the operator $E$ is appreciably semipositive if $\braket{\psi|E|\psi}$ is positive and appreciable or exactly zero for every vector $\ket\psi$.

\begin{restatable}[\Dual semipositivity]{proposition}{propsemipos}\label{pr:semipos}
 For every \dual linear operator $M_\e$, $M_\e^\dagger M_\e$ is appreciably semipositive.
\end{restatable}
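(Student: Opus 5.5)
The plan is to reduce the statement to the computation already done for the \dual norm. Fix an arbitrary $\ket{\psi}\in\DC^n$ and set $\ket{v}:=M_\e\ket{\psi}$. Since the adjoint reverses products, I would first check the identity $\braket{\psi|M_\e^\dagger M_\e|\psi}=\braket{v|v}$. This follows because the adjoint is the conjugate transpose with the conjugation $w\mapsto w^*$, and that conjugation is multiplicative: $(w_1w_2)^*=w_1^*w_2^*$, so $(AB)^\dagger=B^\dagger A^\dagger$ still holds over $\DC$. The question is then only about the sign and appreciability of $\braket{v|v}$.

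Next I would write $\ket{v}=\ket{a}+\e\ket{b}$ with $\ket a,\ket b\in\C^n$. By the \dual norm proposition (equivalently Eq.~\eqref{eq:sqnorm} applied entrywise), $\braket{v|v}=\braket{a|a}+2\Re(\braket{a|b})\,\e$, which is an element of $\D$. It therefore makes sense to compare it with $0$ under the total ring order of Prop.~\ref{pr:dualorder}.

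The argument then splits into two cases.
\begin{itemize}
\item If $\ket a\neq 0$, the significant part $\braket{a|a}=\|\ket a\|^2$ is a strictly positive real. By the lexicographic form of the order, $\braket{v|v}>_\D 0$ regardless of the sign of the infinitesimal part, and regardless of which of the two orders ($\e>0$ or $\e<0$) was chosen. Moreover the number is appreciable, since its significant part is nonzero.
\item If $\ket a=0$, the vector $\ket v=\e\ket b$ is infinitesimal. Then $\braket{v|v}=\e^2\braket{b|b}=0$ exactly, because the cross term $2\Re(\braket{a|b})$ also vanishes.
\end{itemize}
Together the two cases give exactly the dichotomy ``positive and appreciable, or exactly zero''.

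I do not expect a real obstacle here. The only point needing care is the second case: one must note that the infinitesimal part vanishes as well, and not merely the significant part. This is what excludes values such as $-\e$, which would be negative in one of the orders, and is precisely why $M_\e^\dagger M_\e$ is \emph{appreciably} semipositive rather than merely semipositive.
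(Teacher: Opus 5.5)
Your proof is correct and follows essentially the paper's route: you set $\ket{v}=M_\e\ket{\psi}$, reduce to $\braket{v|v}$, and read off ``appreciably positive or exactly zero'' from the significant/infinitesimal decomposition of $w^*w$. The only difference is one of granularity. The paper argues entrywise on each $\phi_j^*\phi_j$ and leaves implicit why the sum of such terms keeps the dichotomy. Your case split on whether the significant part $\ket{a}$ vanishes handles the whole sum at once, without dividing by a possibly zero norm.
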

\begin{proofcontent}{propsemipos}
Let $\ket\psi$ be an arbitrary vector and $\ket\phi = \sum_j \phi_j \ket{j} = M_\e \ket\psi$ for an orthonormal basis $\ket{j}$. Then $\braket{\psi|M_\e^\dagger M_\e|\psi} = \braket{\phi|\phi} = \sum_j \phi_j^*\phi_j$. From Eqs \eqref{eq:eq0}, \eqref{eq:sqnorm} and \eqref{eq:geq0} combined, we have that for all $j$, $\phi_j^* \phi_j$ is either appreciable or exactly zero and always greater or equal to zero. It follows that $M_\e^\dagger M_\e$ is appreciably semipositive.
\end{proofcontent}

The Stinespring dilation theorem also continues to hold with \dual numbers:
\begin{proposition}[\Dual Stinespring dilation]\label{pr:dualstinespring}
For every family of \dual operators $\{M_m^\e\}$, let $V_\e:=\sum_m \ket{m}\bra{0}\otimes M_m^\e$. We have that 
\begin{align*}
        M_m^\e \ket{\psi}&=(\bra{m}\otimes I)V_\e(\ket{0}\otimes\ket{\psi})\\
        \sum_m {M_m^\e}^\dagger M_m^\e=I &\Leftrightarrow V_\e^\dagger V_\e=I
\end{align*}
If the completeness relation is satisfied, $V_\e$ is a \dual isometry and can be extended to a \dual unitary operator $U_\e$, called a {\em Stinespring dilation}.
\end{proposition}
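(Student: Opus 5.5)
The proof has three parts: the reconstruction identity, the equivalence between completeness and isometry, and the extension of the isometry to a unitary. The first two are direct calculations, using only that the adjoint over $\DC$ is conjugate-linear and reverses products. The third is where the lack of a field bites.

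\emph{Reconstruction.} Compute $V_\e(\ket{0}\otimes\ket\psi)=\sum_{m'}\ket{m'}\braket{0|0}\otimes M_{m'}^\e\ket\psi=\sum_{m'}\ket{m'}\otimes M_{m'}^\e\ket\psi$. Applying $\bra{m}\otimes I$ and using $\braket{m|m'}=\delta_{mm'}$ (exact complex numbers) gives $M_m^\e\ket\psi$.

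\emph{Completeness versus isometry.} Read $V_\e$ as a map from $\mathrm{span}\{\ket0\}\otimes\DC^n$, or equivalently restrict to inputs $\ket0\otimes\ket\psi$. Expanding with $(A\otimes B)^\dagger=A^\dagger\otimes B^\dagger$ gives
\[
V_\e^\dagger V_\e=\sum_{m,m'}\ket0\braket{m|m'}\bra0\otimes {M_m^\e}^\dagger M_{m'}^\e=\ketbra00\otimes\sum_m {M_m^\e}^\dagger M_m^\e .
\]
So $V_\e^\dagger V_\e$ is the identity on that subspace exactly when $\sum_m {M_m^\e}^\dagger M_m^\e=I$.

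\emph{Extension to a unitary.} Assume completeness. By Prop.~\ref{pr:unitary}(d), it suffices to complete the columns $\ket m\otimes$-stacked vectors $v_k:=V_\e(\ket0\otimes\ket k)$ into an orthonormal basis of the full space. These columns are orthonormal over $\DC$.

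\begin{itemize}
\item \textbf{Significant parts.} Their significant parts $v_k^{(0)}$ are orthonormal complex vectors. Complete them by ordinary Gram--Schmidt with complex vectors $w_l^{(0)}$.
\item \textbf{Infinitesimal corrections.} Seek $w_l=w_l^{(0)}+\e w_l^{(1)}$. The orthonormality conditions at order $\e$ are
\[
\braket{v_k^{(0)}|w_l^{(1)}}=-\braket{v_k^{(1)}|w_l^{(0)}},\qquad \Re\braket{w_l^{(0)}|w_l^{(1)}}=0,
\]
\[
\braket{w_{l'}^{(0)}|w_l^{(1)}}+\braket{w_{l'}^{(1)}|w_l^{(0)}}=0 .
\]
These are solvable by setting $w_l^{(1)}:=-\sum_k\braket{v_k^{(1)}|w_l^{(0)}}\,v_k^{(0)}$. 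With this choice there is no component along the other $w^{(0)}$, and the last two conditions hold trivially.
\end{itemize}

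An alternative with the same effect is dual Gram--Schmidt. Every normalization there divides by a norm whose significant part is $1$, hence appreciable, so it never divides by an infinitesimal.

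The main obstacle is precisely this extension step. $\DC$ is not a field, so the standard "complete an orthonormal set" argument must be checked to avoid division by infinitesimals. The explicit first-order correction above is how I would handle it.
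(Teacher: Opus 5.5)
Your proof is correct and follows the route the paper intends: the paper's proof is the single line that the complex-case argument carries over (reconstruction, block computation of $V_\e^\dagger V_\e$, completion of an orthonormal set to a basis), and you carry out exactly that argument. You also make explicit two points the paper leaves implicit: that $V_\e^\dagger V_\e=I$ only makes sense on $\ket0\otimes\DC^n$, and that the completion step never divides by an infinitesimal, which your explicit correction $w_l^{(1)}=-\sum_k\braket{v_k^{(1)}|w_l^{(0)}}\,v_k^{(0)}$ checks cleanly.
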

\begin{proof}
The proof is the same as for the case of complex numbers. 
\end{proof}

\section{Postulates of quantum theory over \dual numbers}\label{sec:postulates}

In this section we take the postulates of conventional quantum theory, and replace the complex scalar field, by the \dual ring. This defines a quantum theory over \dual numbers, which we refer to as \dual quantum theory for short.

\begin{postulate}[\Dual states]\label{post:dualstates}
At all times, a \dual quantum system can be described by a \dual state $\ket{\psi_\e}$, defined to be a unit \dual vector.
\end{postulate}

Every \dual state $\ket{\psi_\e}$ decomposes as into a significant part $\ket{\psi}$ and an infinitesimal part $\ket{\phi}$, i.e. $\ket{\psi_\e}=\ket{\psi}+\e\ket{\phi}$. By Eq.~\eqref{eq:norm}:
\begin{align*}
||\ket\psi + \e \ket\phi|| = 1 &\Leftrightarrow ||\ket\psi|| + \e \Re(\braket{\psi|\phi}) = 1\\ 
& \Leftrightarrow (i)\quad ||\ket\psi||=1\quad  \textrm{ and }\quad(ii)\quad \Re(\braket{\psi|\phi})= 0
\end{align*}
Interestingly, if we choose to interpret the infinitesimal part as some kind of perturbation of the significant part, i.e. $\ket{\phi}=A\ket{\psi}$, then this perturbation $A$ is necessarily anti-Hermitian. Indeed, by (ii):
\begin{align*}
\Re(\braket{\psi|\phi}) = \Re(\braket{\psi|A|\psi}) = 0 \Leftrightarrow \braket{\psi|(iA)|\psi} \in \R
\Leftrightarrow A \textrm{ is anti-Hermitian.}
\end{align*}

\begin{postulate}[\Dual evolutions]\label{post:dualevolutions}
Over any finite period of time, the evolution of a \dual quantum system is described by the application of a \dual evolution $U_\e$, defined to be a unitary \dual operator.
\end{postulate}

Every \dual evolution $U_\e$ decomposes as into a significant part $U$ and an infinitesimal part $V$, i.e. $U_\e=U+\e V$. Recall from our Prop.~\ref{pr:unitary}:
\begin{align*}
        U_\e^\dagger U_\e = 1 &\Leftrightarrow (i) \quad U^\dagger U = I\quad  \textrm{ and }\quad (ii)\quad U^\dagger V\textrm{ anti-Hermitian}\\
        &\Leftrightarrow (i) \quad U^\dagger U = I\quad  \textrm{ and }\quad (ii)\quad V=iHU\textrm{with $H$ hermitian}
\end{align*}
This fits the interpretation that the infinitesimal part of a \dual state is indeed produced by an anti-Hermitian perturbation of the significant part of the unitary evolution. As a first example consider:
\begin{align*}
        U_\e = \begin{pmatrix}
        -im\e & 1 \\
        1 & -im\e \\
        \end{pmatrix}\quad
        U= \begin{pmatrix} 0 & 1 \\ 1 & 0 \end{pmatrix} \quad V= \begin{pmatrix} -im & 0 \\ 0 & -im \end{pmatrix} \quad H= \begin{pmatrix} 0 & -m \\ -m & 0 \end{pmatrix}
\end{align*}
Notice although $U_\e$ perfectly unitary in the sense of Prop.~\ref{pr:unitary}, it feels dangerously remote from a unitary in the conventional sense, i.e. if we replace $\e$ by $h\in\mathbb{R}$ the obtained $U_h$ is not a unitary. Still we will prove in Sec.~\ref{sec:consistency}, Prop.~\ref{pr:unitaryextcorr} that $U_h$ is at most $O(h^2)$ away from a unitary, here 
\begin{align*}
\til{U}_h = \begin{pmatrix}
        -i\sin(mh) & \cos(mh) \\
        \cos(mh) & -i\sin(mh) \\
        \end{pmatrix}.
\end{align*}
Another way to put this is that $U_\e$ could equally well have been presented as $\til{U}_\e$, because over \dual numbers we just have $U_\e = \til{U}_\e$. The apparent relaxation of unitarity is therefore innocuous.\\ 
In conventional quantum mechanics, the evolution postulate is equivalently expressed in continuous time through the Schrödinger equation. From our Prop.~\ref{pr:hermunit}, the same equivalence holds here.\\
\noindent {\bf Postulate 2'} (continuous-time evolutions){\bf .} {\em The continuous time evolution of a closed system is described by the Schrödinger equation,
\begin{equation}
 i \hslash \dstate{\psi_\e}{t} = H_\e\ket{\psi_\e}
\end{equation}
where $H_\e$ is a \dual Hermitian linear operator.}

\begin{postulate}[\Dual measurements]\label{post:dualmeasurements}
A family of \dual operators $\{M_m^\e\}$ describes a \dual measurement iff it is bound by the completeness relation $\sum_m {M_m^\e}^\dagger M_m^\e = I$. The outcome $m$ appears with the dual probability $p(m) = \bra\psi {M_m^\e}^\dagger M_m^\e \ket\psi$. The resulting \dual state is
\begin{equation}\label{def:meas}
 \ket{\psi'_\e} = \frac{M_m^\e \ket\psi}{\sqrt{p(m)}}.
\end{equation}
\end{postulate}

Every \dual measurement operator $M_m^\e$ decomposes into a significant part $M_m$ and an infinitesimal part $N_m$, i.e.\ $M_m^\e = M_m + \e N_m$. Expanding the completeness relation:
\begin{align*}
\sum_m (M_m + \e N_m)^\dagger (M_m + \e N_m) = I 
&\Leftrightarrow \sum_m M_m^\dagger M_m + \e \sum_m (M_m^\dagger N_m + N_m^\dagger M_m)= I\\
&\Leftrightarrow (i)\quad \sum_m M_m^\dagger M_m = I \quad \textrm{ and }\quad (ii)\quad  M_m^\dagger N_m \textrm{ anti-Hermitian.}
\end{align*}
To ensure our \dual quantum theory does not cause unwanted behaviors, we must check that probabilities are positive and sum to one. We must also check that the resulting \dual state is well-defined, i.e. that it does not get renormalized by an infinitesimal probability. Fortunately, none of these bad behaviors arise, as we will show in Sec.~\ref{sec:consistency}, Prop.~\ref{pr:consistency}.

As in conventional quantum theory, we can see that when we have exactly one measurement operator, then it must be unitary, which means that Postulate~\ref{post:dualevolutions} can be deduced from Postulate~\ref{post:dualmeasurements}.

\begin{postulate}[Composite \dual quantum systems]
Given two \dual states $\ket{\psi_\e}^A$ and $\ket{\psi_\e}^B$ describing two \dual quantum systems $A$ and $B$ that have not yet interacted, the composite \dual quantum system $AB$ is described by the \dual state $\ket{\psi_\e}^A\otimes\ket{\psi_\e}^B$ with $\otimes$ the usual tensor aka Kronecker product.
\end{postulate}
Notice that this postulate preserves the unit norm. Indeed, if we have two \dual states $\ket{\psi_\e}^A$ and $\ket{\psi_\e}^B$, the norm of $\ket{\psi_\e}^A\otimes \ket{\psi_\e}^B$ is still $\sqrt{\braket{\psi_\e|\psi_\e} \braket{\phi_\e|\phi_\e}} = 1$.

\section{Consistency of \dual quantum theory}\label{sec:consistency}

\subsection*{Self-consistency}

We first prove that \dual quantum theory is internally consistent, in the sense that \dual evolutions and measurements of a system preserves the fact of being a \dual state, and that measurement outcome probabilities remain well-behaved. 

\begin{proposition}\label{pr:consistency}
A system that starts in a \dual state and is evolved according to \dual evolutions and \dual measurements, remains described by a \dual state at all times. 
Moreover, probabilities from any \dual measurement are either zero or appreciably positive, and sum to one.
\end{proposition}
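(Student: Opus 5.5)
The plan is an induction on the number of operations applied. The initial state is a \dual state by Postulate~\ref{post:dualstates}, so it suffices to show that a single \dual evolution and a single \dual measurement each send a \dual state to a \dual state. Along the way we check that the measurement probabilities are well behaved and that the renormalisation in Eq.~\eqref{def:meas} is well defined. No new tools are needed beyond Prop.~\ref{pr:unitary}, Prop.~\ref{pr:semipos}, Eqs.~\eqref{eq:eq0}--\eqref{eq:modul} and Cor.~\ref{cor:roots}.

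\textbf{Evolutions.} For a \dual evolution $U_\e$, Prop.~\ref{pr:unitary}(b) gives $\|U_\e\ket{\psi_\e}\| = \|\ket{\psi_\e}\| = 1$. Hence the output is again a unit \dual vector, which is all this case requires.

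\textbf{Measurements: the probabilities.} Let $\{M_m^\e\}$ satisfy the completeness relation and let $p(m) = \bra{\psi_\e}{M_m^\e}^\dagger M_m^\e\ket{\psi_\e}$.
\begin{itemize}
\item By Prop.~\ref{pr:semipos}, each $p(m)$ is either exactly zero or appreciable and $\geq_\D 0$ in the lexicographic order of Prop.~\ref{pr:dualorder}. This is the claimed dichotomy.
\item Summing over $m$ and using completeness gives $\sum_m p(m) = \braket{\psi_\e|\psi_\e} = 1$.
\end{itemize}

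\textbf{Measurements: the post-measurement state.} Fix an outcome $m$ with $p(m)\neq 0$; outcomes with $p(m)=0$ never occur, so no division arises for them.
\begin{itemize}
\item By the dichotomy, $p(m) = a + b\e$ with $a > 0$ real.
\item Cor.~\ref{cor:roots} (or Eq.~\eqref{eq:modul}) then gives $\sqrt{p(m)} = \sqrt a + \frac{b}{2\sqrt a}\e$. This is appreciable, so it is invertible by the division rule of Section~\ref{sec:dualnumbers}, and its inverse is again real-dual.
\item Therefore $\ket{\psi'_\e}$ is well defined, and
\[
\braket{\psi'_\e|\psi'_\e} = \frac{\bra{\psi_\e}{M_m^\e}^\dagger M_m^\e\ket{\psi_\e}}{\sqrt{p(m)}^*\sqrt{p(m)}} = \frac{p(m)}{p(m)} = 1 .
\]
This uses that $\sqrt{p(m)}$ is real-dual, so it equals its conjugate and its square is $p(m)$.
\end{itemize}

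\textbf{Main obstacle.} The delicate point is the case where $p(m)$ is nonzero but infinitesimal, which would force division by an infinitesimal. This is excluded precisely by Prop.~\ref{pr:semipos}. I would double-check that proposition's argument here. For each component $\phi_j$ of $M_m^\e\ket{\psi_\e}$, Eq.~\eqref{eq:sqnorm} gives $\phi_j^*\phi_j = |z_j|^2 + 2\Re(z_j^*t_j)\e$. This is exactly $0$ when $z_j = 0$ and appreciable positive otherwise. A sum of such terms is zero exactly when all terms vanish, and appreciable positive otherwise. That establishes the dichotomy and closes the induction.
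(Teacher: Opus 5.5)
Your proof is correct and follows essentially the same route as the paper: norm preservation under evolutions from Prop.~\ref{pr:unitary}(b), the zero-or-appreciably-positive dichotomy from Prop.~\ref{pr:semipos}, and summing to one via completeness. You also check explicitly that $\sqrt{p(m)}$ is an appreciable real dual number, hence invertible and self-conjugate, which makes precise the well-definedness of the renormalisation that the paper's one-line computation $\|M_m^\e\ket\psi\|/\|M_m^\e\ket\psi\| = 1$ leaves implicit.
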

\begin{proof}
By Prop.~\ref{pr:unitary}, a \dual unitary preserves the norm. For measurements, from Eq.~\eqref{def:meas} norm preservation is given by
\begin{equation}
\norm{\frac{M_m^\e \ket\psi}{\sqrt{p(m)}}} = \norm{\frac{M_m^\e \ket\psi}{\sqrt{\braket{\psi|{M_m^\e}^\dagger M_m^\e|\psi}}}} = \frac{||M_m^\e \ket\psi||}{||M_m^\e\ket\psi||} = 1.
\end{equation}
Probabilities sum to one since $\sum_m p(m) = \sum_m \braket{\psi|{M_m^\e}^\dagger M_m^\e|\psi} = \braket{\psi|(\sum_m {M_m^\e}^\dagger M_m^\e)|\psi} = \braket{\psi|I|\psi} = 1$ and are either zero or appreciably positive by Prop.~\ref{pr:semipos}.
\end{proof}

\subsection*{Consistency with conventional quantum theory}

In Sec~\ref{sec:automaticdiff} how holomorphic functions can be extended from complex numbers to \dual. In a similar vein, we now show how conventional quantum theory operators parametrized by $h$, can be extended to become \dual quantum theory operators. 

We subsequently answer to concerns that the \dual quantum theory operators obtained that way, or otherwise, may only be valid only in \dual quantum theory. E.g. \dual unitary operators and \dual complete measurements are not in general unitary and complete in the conventional sense---when $\e$ is replaced by a real number $h$. We give a general method to add an $O(h^2)$ correction which ``unitarizes/completes'' them, in the conventional sense.
 
\begin{definition}[\Dual extension and complex correction for unitaries]\label{def:unitaryextcorr}
Let $U_h$ be a conventional unitary which is analytic in its real parameter $h$. Let $U:=U_0$ and recall that $\frac{\mathrm{d}U_h}{\mathrm{d}h}\big|_{h=0} = i H U$ for some Hermitian $H$. The {\em \dual extension} of $U_r$ is defined by:
\begin{align*}
\hat{U}_{\e}:=U+\e i H U. 
\end{align*}
Let $U_\e$ be a \dual unitary. Recall that $U_\e=U+\e i H U$.  Its {\em complex correction} is the conventional unitary:
\begin{equation}
\til{U}_h := U_h + h^2\sum_{n=0}^\infty \frac{1}{(n+2)!} (i H)^{n+2}h^n U = \exp(i h H) U
\end{equation}
\end{definition}

\begin{proposition}[\Dual extension and complex correction for unitaries]\label{pr:unitaryextcorr}
Consider $U_h$ a conventional unitary analytic in its real parameter $h$. We have $U_h =\hat{U}_{h}+O(h^2)$.\\
Consider $U_\e$ a \dual unitary. We have $U_\e\big|_{\e=h} + O(h^2)=\til{U}_{h}$.
\end{proposition}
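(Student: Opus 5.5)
The plan is to treat the two claims separately; both reduce to first-order Taylor expansion in $h$, with the analyticity hypothesis used to control the remainder.

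\emph{First claim.} Since $U_h$ is analytic in $h$, near $h=0$ it has a convergent expansion
\begin{equation*}
U_h = U_0 + h\,\frac{\mathrm{d}U_h}{\mathrm{d}h}\Big|_{h=0} + h^2 R(h),
\end{equation*}
where $R(h)$ is analytic, hence bounded in operator norm on a neighbourhood of $0$. I would first justify that the derivative has the form $iHU$ with $H$ Hermitian. Differentiate $U_h^\dagger U_h = I$ at $h=0$ to get $U^\dagger D + D^\dagger U = 0$ with $D := \frac{\mathrm{d}U_h}{\mathrm{d}h}\big|_{h=0}$. Thus $U^\dagger D$ is anti-Hermitian, say $U^\dagger D = iK$ with $K$ Hermitian. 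Then $D = iUKU^\dagger U = iHU$ with $H := UKU^\dagger$ Hermitian. This is the same computation as the (a)$\Leftrightarrow$(e) direction of Prop.~\ref{pr:unitary}. Substituting gives
\begin{equation*}
U_h = U + hiHU + O(h^2) = \hat{U}_\e\big|_{\e=h} + O(h^2),
\end{equation*}
which is the first claim, reading $\hat U_h$ as $\hat U_\e$ with $\e$ replaced by $h$.

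\emph{Second claim.} By Prop.~\ref{pr:unitary}(e), write $U_\e = U + \e iHU$ with $U$ unitary and $H$ Hermitian. Then $U_\e|_{\e=h} = U + hiHU$. I would then check the two facts implicit in Def.~\ref{def:unitaryextcorr}. The first is the identity
\begin{equation*}
U + hiHU + h^2\sum_{n\ge0}\frac{(iH)^{n+2}h^n}{(n+2)!}\,U = \exp(ihH)\,U,
\end{equation*}
which is just the exponential series split after its first two terms. The second is that $\exp(ihH)U$ is a conventional unitary, being a product of unitaries since $H$ is Hermitian and $h$ is real. It remains to bound the tail: its operator norm is at most $h^2\,\|H\|^2\exp(|h|\,\|H\|)$, which is $O(h^2)$ as $h\to0$. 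Hence $U_\e|_{\e=h} + O(h^2) = \til U_h$.

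I expect the main obstacle to be the bookkeeping in the first part: showing that the derivative of an analytic unitary family is exactly of the form $iHU$, and that the $O(h^2)$ remainder is uniform in operator norm. Analyticity, through a convergent power series with bounded remainder, handles the latter. The second part is a routine norm estimate on the exponential tail.
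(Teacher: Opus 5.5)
Your proposal is correct and follows the paper's proof: a first-order Taylor expansion of $U_h$ for the first part, and of $\exp(ihH)$ for the second. You additionally justify that the derivative has the form $iHU$, and you give an explicit operator-norm bound on the exponential tail; the paper only asserts the former in Def.~\ref{def:unitaryextcorr} and leaves the latter implicit in ``by def.\ of matrix exponentials''.
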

\begin{proof}
First part, the Taylor expansion of $U_h$ gives
\begin{equation}
U_h = U + h\,iHU + O(h^2) = \hat{U}_h + O(h^2).
\end{equation}
Second part,
\begin{align*}
\til{U}_h &= \exp(ihH)U= \left(I + ihH + O(h^2)\right)U \quad\textrm{by def. of matrix exponentials}\\
&= U + ihHU + O(h^2)= U_\e\big|_{\e=h} + O(h^2).
\end{align*}
\end{proof}

\begin{definition}[\Dual extension and complex correction for measurements]\label{def:measextcorr}
Let $\{M_m^h\}$ be a conventional complete measurement analytic in its real parameter $h$. Let $M_m:=M_m^0$ and $N_m:=\frac{\mathrm{d}M_m^h}{\mathrm{d}h}\big|_{h=0}$. The {\em \dual extension} of $\{M_m^h\}$ is $\{\hat{M}_m^{\e}\}$ with
\begin{align*}
\hat{M}_m^{\e}:=M_m+\e N_m.
\end{align*}
Let $\{M_m^\e\}$ be a \dual complete measurement with Stinespring dilation $U_\e = U + \e iHU$ (Prop.~\ref{pr:dualstinespring}). The corresponding {\em complex correction} is the conventional measurement $\{\til{M}_m^h\}$ with:
\begin{equation}
\til{M}_m^h := (\bra{m} \otimes I)\, \til{U}_h\, (\ket{0} \otimes I).
\end{equation}
\end{definition}
\begin{proof}
$\{\hat{M}_m^{\e}\}$ is complete as \dual measurement since 
\begin{align*}
\sum_m (\hat{M}_m^\e)^\dagger \hat{M}_m^\e 
    &= \sum_m (M_m^\dagger + \e N_m^\dagger)(M_m + \e N_m)
    = \sum_m M_m^\dagger M_m + \e\sum_m (M_m^\dagger N_m + N_m^\dagger M_m) \\
    &= \sum_m M_m^\dagger M_m + \e\,\frac{\mathrm{d}}{\mathrm{d}h}\bigg|_{h=0}\sum_m (M_m^h)^\dagger M_m^h = I.
\end{align*}
$\{\til{M}_m^h\}$ is a valid conventional complete measurement since $\til{U}_h$ is a conventional unitary.
\end{proof}

\begin{proposition}[\Dual extension and complex correction for measurements]\label{pr:measextcorr}
Consider $\{M_m^h\}$ a conventional measurement analytic in its real parameter $h$. We have $M_m^h = \hat{M}_m^h + O(h^2)$ and $p^h(m) = \hat{p}^h(m) + O(h^2)$.\\
Consider $\{M_m^\e\}$ a \dual measurement with Stinespring dilation $U_\e$. We have $M_m^h\big|_{\e=h} + O(h^2) = \til{M}_m^h$ and $p^\e\big|_{\e=h}(m) + O(h^2) = \til{p}^h(m)$.
\end{proposition}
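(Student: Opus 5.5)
The plan mirrors the proof of Prop.~\ref{pr:unitaryextcorr}, plus one extra step that transfers the operator estimates to probabilities through the bilinearity of $p(m)=\braket{\psi|M^\dagger M|\psi}$. Here $\hat{p}^h(m)$ means the dual probability built from $\hat{M}_m^\e$ and evaluated at $\e=h$. Throughout, $\ket\psi$ is held fixed and only the measurement operators vary; the state, if it has an infinitesimal part, is substituted at $\e=h$ in the same way.

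\emph{First part.} Since $M_m^h$ is analytic in $h$, Taylor's theorem gives
\begin{equation*}
M_m^h=M_m+hN_m+O(h^2)=\hat{M}_m^h+O(h^2)
\end{equation*}
directly from Def.~\ref{def:measextcorr}. For the probabilities, I expand both sides:
\begin{equation*}
p^h(m)=\braket{\psi|(\hat{M}_m^h+O(h^2))^\dagger(\hat{M}_m^h+O(h^2))|\psi}.
\end{equation*}
All cross terms contain a factor $O(h^2)$ multiplied by operators bounded uniformly for small $h$, so they are $O(h^2)$. Hence $p^h(m)=\braket{\psi|(\hat{M}_m^h)^\dagger\hat{M}_m^h|\psi}+O(h^2)$. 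The remaining point is that the dual probability
\begin{equation*}
\hat{p}^\e(m)=\braket{\psi|M_m^\dagger M_m|\psi}+\e\braket{\psi|(M_m^\dagger N_m+N_m^\dagger M_m)|\psi}
\end{equation*}
differs from $\braket{\psi|(\hat{M}_m^h)^\dagger\hat{M}_m^h|\psi}$ only by the term $h^2\braket{\psi|N_m^\dagger N_m|\psi}=O(h^2)$. This term is exactly what $\e^2=0$ discards. Therefore $\hat{p}^\e(m)\big|_{\e=h}=\hat{p}^h(m)+O(h^2)$, which gives the claim.

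\emph{Second part.} By Prop.~\ref{pr:dualstinespring}, $M_m^\e=(\bra m\otimes I)U_\e(\ket0\otimes I)$, and by Prop.~\ref{pr:unitaryextcorr}, $\til{U}_h=U_\e|_{\e=h}+O(h^2)$. I sandwich this identity between the fixed, bounded maps $(\bra m\otimes I)$ and $(\ket0\otimes I)$. Using linearity and the fact that substituting $\e=h$ commutes with these complex maps, this yields $\til{M}_m^h=M_m^\e|_{\e=h}+O(h^2)$. The probability statement then follows from the same bilinear-expansion argument as in the first part. The $O(h^2)$ error in the operators produces an $O(h^2)$ error in $\braket{\psi|\til{M}^\dagger\til{M}|\psi}$, and evaluating the dual probability at $\e=h$ differs from the conventional quadratic form only by the dropped $h^2\braket{\psi|N_m^\dagger N_m|\psi}$ term.

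The main obstacle is being precise about what ``$p^\e|_{\e=h}$'' means. One has to fix the convention that the dual expression is first reduced to its normal form $a+b\e$ (with $\e^2=0$) and only then evaluated at $\e=h$. Under this convention the discrepancy with the genuinely conventional quantity is a single explicit $h^2$ term. I would state this convention at the start and verify once that it is compatible with sums and products up to $O(h^2)$. After that, everything reduces to uniform boundedness of the analytic families near $h=0$, which holds on a compact neighbourhood of $0$.
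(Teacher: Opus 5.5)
Your proposal is correct and follows essentially the same route as the paper. The first part is a Taylor expansion; the second sandwiches the estimate $\til{U}_h = U_\e|_{\e=h} + O(h^2)$ of Prop.~\ref{pr:unitaryextcorr} between $(\bra{m}\otimes I)$ and $(\ket{0}\otimes I)$; and the probability claims follow by expanding the quadratic form. Your explicit accounting of the term $h^2\braket{\psi|N_m^\dagger N_m|\psi}$ discarded by $\e^2=0$ is a useful precision the paper leaves implicit. One small slip: with your definition of $\hat{p}^h(m)$, the comparison you intend is between $\hat{p}^\e(m)|_{\e=h}$ and the conventional form $\braket{\psi|(\hat{M}_m^h)^\dagger\hat{M}_m^h|\psi}$, not $\hat{p}^h(m)$ itself, which would make that equation trivially true.
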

\begin{proof}
First part, the Taylor expansion of $M_m^h$ gives
\begin{align*}
M_m^h &= M_m + h N_m + O(h^2) = \hat{M}_m^h + O(h^2).
\end{align*}
Hence, $(M_m^h)^\dagger M_m^h = (\hat{M}_m^h)^\dagger \hat{M}_m^h + O(h^2)$, so $p^h(m) = \hat{p}^h(m) + O(h^2)$.\\
Second part, by Prop.~\ref{pr:unitaryextcorr}, $\til{U}_h = U_\e\big|_{\e=h} + O(h^2)$ with $U_\e\big|_{\e=h} = U + hiHU$. Thus
\begin{align*}
\til{M}_m^h &= (\bra{m} \otimes I)\, \til{U}_h\, (\ket{0} \otimes I) = (\bra{m} \otimes I)(U_\e\big|_{\e=h} + O(h^2))(\ket{0} \otimes I) \\
&= (\bra{m} \otimes I) U_\e\big|_{\e=h} (\ket{0} \otimes I) + O(h^2) = M_m^h  + O(h^2).
\end{align*}
So $(\til{M}_m^h)^\dagger \til{M}_m^h = (M_m^h)^\dagger M_m^h + O(h^2)$, hence $\til{p}^h(m) = p^\e\big|_{\e=h}(m) + O(h^2)$.
\end{proof}
Example~\ref{ex:measextcorr} illustrates the use of this proposition.

\section{Application: the Dirac Quantum Walk}\label{sec:QCA}

In this section, we apply \dual quantum theory in order to provide a unified treatment of both 1/ the Dirac QW \cite{arrighi2020}---which is a fundamentally discrete space discrete time quantum algorithm simulating the propagation of a fermion and 2/ the Dirac equation---which is a fundamentally continuous space continuous time PDE describing the propagation of a fermion. Essentially, all we need to do formulate the Dirac QW, i.e. take seriously the fact that $\e^2=0$.

The Dirac QW is a discrete model a particle moving in discrete steps $\Delta_x=\Delta_t=\e$. It moves either left or right, accordingly to its spin degree of freedom, but sometimes changes direction is a way that is proportional to $\e m$. Let $\ket{x+}$ be the basic state that represents a right-mover at $x$, and similarly let $\ket{x-}$ represent the left-mover. At any given time, the \dual states we employ are of the form $\ket{\psi(t)} = \sum_x \psi^+(x, t)\ket{x+}+\psi^-(x, t)\ket{x-}$. The relation between the amplitudes at step $t$ and at step $t+\e$ is given by
\begin{align}
 \begin{pmatrix}
  \psi^-(x-\e, t + \e) \\
  \psi^+(x+\e, t + \e)
 \end{pmatrix} = U_\e \begin{pmatrix}
  \psi^+(x, t) \\
  \psi^-(x, t)
 \end{pmatrix}
\end{align}\label{eq:applyingUe}
where $U_\e$ is the \dual evolution
\begin{equation}
        U_\e = \begin{pmatrix}
        -im\e & 1 \\
        1 & -im\e \\
\end{pmatrix}.
\end{equation}
In the literature, one usually finds 
\begin{equation}
\til{U}_\e = \begin{pmatrix}
-i\sin(m\e) & \cos(m\e) \\
\cos(m\e) & -i\sin(m\e) \\
\end{pmatrix}
\end{equation}
as it feels more unitary. But, over the \dual numbers, $U_\e=\til{U}_\e$, and in \dual quantum theory $U_\e$ is already unitary.
These gates are arranged as in Fig. \ref{fig:diracqca}. 
\begin{figure}[htp]
\centering
\includegraphics[width=0.6\linewidth]{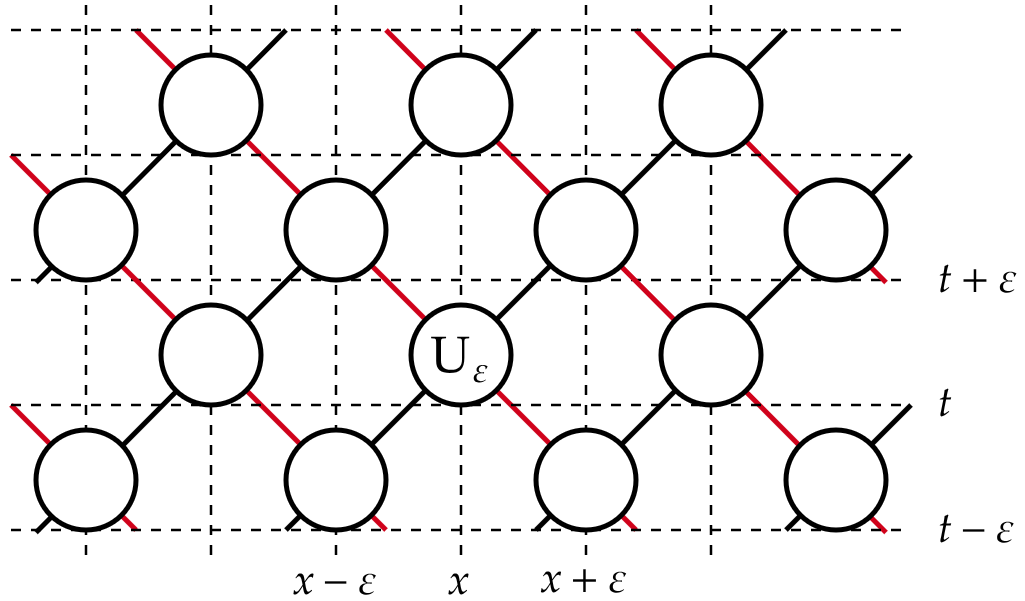}
\caption{The Dirac Quantum Walk}
\label{fig:diracqca}
\end{figure}
Applying the automatic differentiation Prop.~\ref{pr:auto}, to the second row of Eq.~\eqref{eq:applyingUe}, we get
\begin{align*}
  &\psi^+(x+\e, t+\e) = \psi^+(x, t) - im\e \psi^-(x, t) \\
  \Leftrightarrow\quad &\psi^+(x, t) + \e \frac{\partial}{\partial x}\psi^+(x, t) + \e \frac{\partial}{\partial t}\psi^+(x, t) = \psi^+(x, t) - im\e \psi^-(x, t) \\
  \Leftrightarrow\quad & \frac{\partial}{\partial t}\psi^+(x, t) = - \frac{\partial}{\partial x} \psi^+(x, t) - im \psi^-(x, t) 
\end{align*}
Similarly, with the first row we get
\begin{align*}
   \frac{\partial}{\partial t}\psi^-(x, t) = \frac{\partial}{\partial x} \psi^-(x, t) - im \psi^+(x, t).
\end{align*}
These two PDEs can be stacked in a vector form, this is the Dirac equation:
\begin{align*}
 \partial_t \psi = - \sigma_3 \partial_x \psi - im \sigma_1 \psi         
 \quad  \textrm{ with }\quad \psi(x, t) = \begin{pmatrix}
        \psi^+(x, t)\\
        \psi^-(x, t)\\
       \end{pmatrix}.
\end{align*}

In other words, in the \dual quantum theory formulation of the Dirac QW, the process of taking the continuum limit is fully automated. In fact, the very same object describes both the QW and the PDE, they are equated. 

A major byproduct of this equality is that the continuous symmetries of the PDE are no longer broken by the QW formulation.

Indeed, the fundamental symmetry upon which the Dirac equation is built is Lorentz covariance, capturing the fact that Physics is unchanged under Lorentz transformations, which correspond to rescalings along the lightlike directions. For QWs, a precise notion of discrete Lorentz covariance exists and has been studied in detail in~\cite{ArrighiLorentzCovariance}. In that setting, a discrete Lorentz transform with integer parameters $\alpha,\beta$ is defined by replacing each spacetime point with a lightlike $\alpha\times\beta$ rectangular patch. Each such patch is obtained by applying isometric encodings $E_\alpha$ and $E_\beta$ on the left-moving and right-moving wires, then evolving under the same QW, although with a rescaled mass, throughout the patch. Performing such a discrete Lorentz transform and then taking the continuum limit, is the same as taking the continuum limit first and then performing the continuous Lorentz transform. 
Discrete Lorentz covariance is the requirement that these patches join correctly, which amounts to the quantum circuit equality shown in Fig.~\ref{fig:CovRules}. I.e. there must exist a family of isometric encodings $E_\alpha$, $E_\beta$ such that for all $\alpha,\beta$, the QW evolution and the encoding commute in the sense of this figure.

\begin{figure*}
\centering
\resizebox{0.7\linewidth}{!}{%
\begin{tikzpicture}[node distance=2cm,on grid,auto,/tikz/initial text=]
\path[use as bounding box] (-3.0cm,1cm) rectangle (14.5cm,8.5cm);

\newcommand{\gate}[3]{
\path (10 mm,05 mm) coordinate (bas);
\path (10 mm,\the\numexpr 5+#1*10 \relax mm) coordinate (haut);
\path (bas) edge (haut);
\path (4 mm, \the\numexpr 5+#1*5 \relax mm) coordinate (milieu);
\path (milieu)++(-4 mm, 0 mm) coordinate (milieul) ;
\path (milieu)++(0 mm,\the\numexpr #1*2 \relax mm) coordinate (controleh) ;
\path (milieu)++(03 mm, \the\numexpr #1*4 \relax mm) coordinate (controlehh) ;
\path (milieu)++(0,\the\numexpr -1*#1*2 \relax mm) coordinate (controleb) ;
\path (milieu)++(03 mm,\the\numexpr -1*#1*4 \relax mm ) coordinate (controlebb) ;
\path (milieu)++(3mm, 0mm) node[rotate=#2]{#3};
\draw[-] (bas).. controls (controlebb) and (controleb).. (milieu);
\draw[-] (milieu).. controls (controleh) and (controlehh) .. (haut);
\draw[-] (milieul) edge (milieu);
\foreach \i in {1,..., #1}{
   \path[-] (1 cm,\i cm) edge (1.3 cm,\i cm);
};
}

\newcommand{\mnwires}[7]{
\foreach \i in {1,..., #1}{
   \path[-] (1 cm,\i cm) edge (\the\numexpr 2+#2\relax cm,\i cm);
};
\gate{#1}{#3}{#5};
}

\newcommand{\crossing}[2]{
\mnwires{#1}{#2}{0}{}{$E_{\beta}$}{$E^\dagger_{\beta}$}{above right}
\savebox{\wires}{
\mnwires{#2}{#1}{-90}{}{$E_{\alpha}$}{$E^\dagger_{\alpha}$}{above left}
}
\node[rotate=90] at (\the\numexpr 2+#2\relax,-1) {\usebox{\wires}};
\foreach \i in {1,...,#2}{
 \foreach \j in {1,...,#1}{
\node[circle,draw,fill=white,inner sep=1mm] at (1+\i,\j) {};
}}
}

\savebox{\crossings}{
\begin{scope}[rotate=45]
\crossing{4}{6};
\end{scope}
}

\node[xshift=9cm]{\usebox{\crossings}};

\savebox{\crossings}{
\begin{scope}[rotate=45]
\gate{4}{0}{$E_{\beta}$};
\savebox{\gates}{\gate{6}{-90}{$E_{\alpha}$}}
\node[rotate=90] at (1,4) {\usebox{\gates}};
\path[-] (-6cm,2.5cm) edge (0cm,2.5cm);
\path[-] (-2.5cm,0cm) edge (-2.5cm,4cm);
\node[circle,draw,fill=lightgray,inner sep=1mm] at (-2.5,2.5) {};
\end{scope}
}

\node[xshift=4.5cm,yshift=4cm]{\usebox{\crossings}};
\node[xshift=5.5cm,yshift=4cm]{{\huge $=$}};
\end{tikzpicture}%
}
\caption{\label{fig:CovRules} Discrete Lorentz covariance as a quantum circuit equality.}
\end{figure*}

Unfortunately, the conventional quantum theory formulation of the Dirac QW is not exactly discrete Lorentz covariant. As shown by Fig.~\ref{fig:CovarianceFailure}, some terms in $O(\e^2)$ mess up the required circuit equality. Discrete Lorentz covariance holds only to first order, as illustrated in Fig.~\ref{fig:FirstOrderCovariance}.

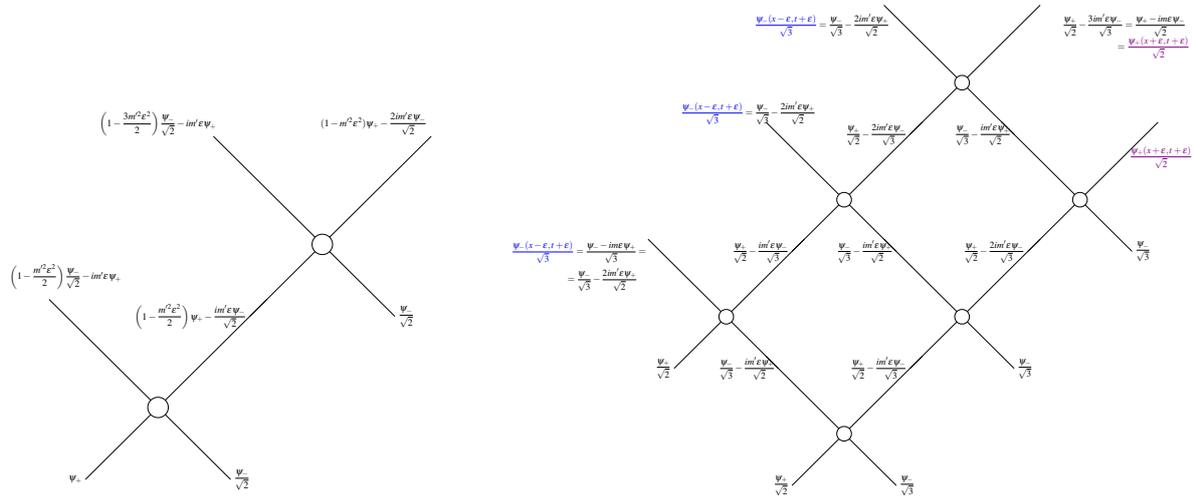
\begin{figure*}
\centering
\begin{subfigure}[t]{0.38\textwidth}
\centering
\resizebox{\linewidth}{!}{%
\begin{tikzpicture}[auto,scale=4]
\path[use as bounding box] (0.5,0.5) rectangle (5,5);

\savebox{\patchuno}{
\begin{scope}[rotate=45]
\crossingswname{1}{1}{$\psi_+$}{$\dfrac{\psi_-}{\sqrt{2}}$}{black}{left}{right};
\end{scope}
\node at (-0.9,2.7) {$\left( 1-\dfrac{m'^2 \varepsilon^2}{2}\right)\dfrac{\psi_-}{\sqrt{2}}-\ii m' \varepsilon\psi_+$};
}

\savebox{\patchdos}{
\begin{scope}[rotate=45]
\crossingswname{1}{1}{$\left( 1-\dfrac{m'^2 \varepsilon^2}{2}\right)\psi_+-\dfrac{\ii m' \varepsilon\psi_-}{\sqrt{2}}$}{$\dfrac{\psi_-}{\sqrt{2}}$}{black}{left}{right};
\end{scope}
\node at (-1.6,2.6) {$\left( 1-\dfrac{3m'^2 \varepsilon^2}{2}\right)\dfrac{\psi_-}{\sqrt{2}}-\ii m' \varepsilon\psi_+$};
\node at (0.5,2.6) {$(1-m'^2 \varepsilon^2)\psi_+ - \dfrac{2\ii m' \varepsilon\psi_-}{\sqrt{2}}$};
}

\node at (2,0) {\usebox{\patchuno}};
\node at (3.6,1.6) {\usebox{\patchdos}};
\end{tikzpicture}%
}
\subcaption{\label{fig:CovarianceFailure} Failure of covariance at second order.}
\end{subfigure}\hfill
\begin{subfigure}[t]{0.6\textwidth}
\centering
\resizebox{\linewidth}{!}{%
\begin{tikzpicture}[auto,scale=3]
\path[use as bounding box] (-2.8,0.5) rectangle (7.1,7.5);

\savebox{\patchuno}{
\begin{scope}[rotate=45]
\crossingswname{1}{1}{$\dfrac{\psi_+}{\sqrt{2}}$}{$\dfrac{\psi_-}{\sqrt{3}}$}{black}{left}{right};
\end{scope}
}

\savebox{\patchdos}{
\begin{scope}[rotate=45]
\crossingswname{1}{1}{$\dfrac{\psi_+}{\sqrt{2}}-\dfrac{\ii m' \varepsilon\psi_-}{\sqrt{3}}$}{$\dfrac{\psi_-}{\sqrt{3}}$}{black}{left}{right};
\end{scope}
}

\savebox{\patchtres}{
\begin{scope}[rotate=45]
\crossingswname{1}{1}{$\dfrac{\psi_+}{\sqrt{2}}$}{$\qquad\dfrac{\psi_-}{\sqrt{3}}-\dfrac{\ii m' \varepsilon\psi_+}{\sqrt{2}}$}{black}{left}{left};
\end{scope}
\node at (-2,2.3) {${\color{blue}\dfrac{\psi_- (x-\varepsilon,t+\varepsilon)}{\sqrt{3}}} = \dfrac{\psi_- -\ii m \varepsilon \psi_+ }{\sqrt{3}} =$};
\node at (-1.68,1.92) {$=\dfrac{\psi_- }{\sqrt{3}}-\dfrac{2\ii m'\varepsilon \psi_+  }{\sqrt{2}}$};
}

\savebox{\patchcuatro}{
\begin{scope}[rotate=45]
\crossingswname{1}{1}{$\dfrac{\psi_+}{\sqrt{2}}-\dfrac{\ii m' \varepsilon\psi_-}{\sqrt{3}}$}{$\qquad\dfrac{\psi_-}{\sqrt{3}}-\dfrac{\ii m' \varepsilon\psi_+}{\sqrt{2}}$}{black}{left}{left};
\end{scope}
\node at (-1.3,2.6) {${\color{blue}\dfrac{\psi_- (x-\varepsilon,t+\varepsilon)}{\sqrt{3}}} = \dfrac{\psi_- }{\sqrt{3}}-\dfrac{2\ii m'\varepsilon \psi_+  }{\sqrt{2}}$};
\node at (3.8,3.8) {$ \dfrac{\psi_+ }{\sqrt{2}}-\dfrac{3\ii m'\varepsilon \psi_- }{\sqrt{3}}= \dfrac{\psi_+ -\ii m \varepsilon \psi_- }{\sqrt{2}}$};
\node at (4.2,3.5) {$ ={\color{violet} {\dfrac{\psi_+ (x+\varepsilon,t+\varepsilon)}{\sqrt{2}}}}$};
}

\savebox{\patchcinco}{
\begin{scope}[rotate=45]
\crossingswname{1}{1}{$\dfrac{\psi_+}{\sqrt{2}}-\dfrac{2\ii m' \varepsilon\psi_-}{\sqrt{3}}$}{$\dfrac{\psi_-}{\sqrt{3}}$}{black}{left}{right};
\end{scope}
\node at (1.1,2.0) {$ {\color{violet} {\dfrac{\psi_+ (x+\varepsilon,t+\varepsilon)}{\sqrt{2}}}}$};
}

\savebox{\patchseis}{
\begin{scope}[rotate=45]
\crossingswname{1}{1}{$\dfrac{\psi_+}{\sqrt{2}}-\dfrac{2\ii m' \varepsilon\psi_-}{\sqrt{3}}$}{$\qquad\dfrac{\psi_-}{\sqrt{3}}-\dfrac{\ii m' \varepsilon\psi_+}{\sqrt{2}}$}{black}{left}{left};
\end{scope}
\node at (-1.9,2.2) {${\color{blue}\dfrac{\psi_- (x-\varepsilon,t+\varepsilon)}{\sqrt{3}}} = \dfrac{\psi_- }{\sqrt{3}}-\dfrac{2\ii m'\varepsilon \psi_+  }{\sqrt{2}}$};
}

\node at (2,0) {\usebox{\patchuno}};
\node at (3.6,1.6) {\usebox{\patchdos}};
\node at (0.4,1.6) {\usebox{\patchtres}};
\node at (2.0,3.2) {\usebox{\patchcuatro}};
\node at (5.2,3.2) {\usebox{\patchcinco}};
\node at (3.6,4.8) {\usebox{\patchseis}};
\end{tikzpicture}%
}
\subcaption{\label{fig:FirstOrderCovariance} First-order covariance, hence exact \dual covariance.}
\end{subfigure}
\caption{{\em Discrete Lorentz covariance for the Dirac Quantum Walk.} Here the chosen isometries $E_\alpha$ and $E_\beta$ spread the incoming amplitude uniformly, and $m'=m/\sqrt{\alpha\beta}$.}
\end{figure*}

In the \dual quantum theory formulation of the Dirac QW, those second-order terms therefore vanish, thereby restoring exact discrete Lorentz covariance. This is a contribution, bearing in mind the many works dealing with discreteness induced Lorentz-symmetry breaking e.g. \cite{Rovelli2003reconcile,livine2004lorentz,dowker2004quantum,PaviaLORENTZ,PaviaLORENTZ2,DebbaschLORENTZ}.

Finally, notice that by applying the complex correction of Def.~\ref{def:unitaryextcorr}, we recover the conventional quantum theory Dirac QW $\til{U}_h$. Interestingly, this automates the exact path Meyer took, by hands, to design the first ever QW \cite{meyer1996quantum}. Prop.~\ref{pr:unitaryextcorr} ensures that $\til{U}_h=U_h+O(h^2)$.

\section{Conclusion}\label{sec:conclusion}

In this paper, we demonstrated that quantum theory can safely be extended from complex numbers to \dual numbers, and that this offers a common language in order to deal with both continuous and discrete physical models.

First, we advanced our knowledge of \dual linear algebra by demonstrating that \dual unitary are 1) diagonalizable and 2) related to \dual Hermitian \textit{via} exponentiation and logarithm maps. We also introduced a notion of ordering on \dual numbers.

Second, we presented a set of postulates extending quantum theory to $\mathbb{C}[\e]$ with $\e^2 = 0$. We showed that the equivalence between the discrete-time unitary formulation and the continuous-time Schrödinger equation continues to hold. 

Third, we established the self-consistency of \dual quantum theory by showing norm and probability preservation, and introducing a translation between parametrized conventional quantum theory and \dual quantum theory. On can always be viewed as approximating the other up to $O(h^2)$.

Fourth, we applied this formalism to a concrete example, namely we were able to describe both the Dirac equation its corresponding a Discrete Time Quantum Walk, as a single object.

These results show the possibility and usefulness of applying \dual numbers to quantum theory. They are also of general mathematical interest as they contribute to the ongoing research on \dual linear operators.

We aim to pursue these results in order to reach a mathematical framework in which we can unify discrete and continuous Physics symmetries beyond Lorentz boosts. 

Additionally, the foundational debate on the set of numbers to use for quantum mechanics could also benefit of generalization of our extension to higher-order \dual numbers or general quaternions $H(\alpha, 0)$. 
Another direction would be to explore connections with Grassman numbers, which also obeys nilpotency conditions and may offer new ways of describing fermionic systems in a generalized \dual formalism of quantum mechanics. 

\paragraph*{Acknowledgements} This project was partially funded by the European Union through the MSCA SE project QCOMICAL, by the French National Research Agency (ANR): projects TaQC ANR-22-CE47-0012 and within the framework of `Plan France 2030', under the research projects EPIQ ANR-22-PETQ-0007, OQULUS ANR-23-PETQ-0013, HQI-Acquisition ANR-22-PNCQ-0001 and HQI-R\&D
ANR-22-PNCQ-0002, and by the WOST, WithOut SpaceTime project (https://withoutspacetime.org), grant ID\# 63683 from the John Templeton Foundation (JTF). The opinions expressed in this work are those of the author(s) and do not necessarily reflect the views of the John Templeton Foundation.

\bibliographystyle{plain}
\bibliography{bib}

\iftoappendix
\appendix

\section{Proofs for Sec.~\ref{sec:dualnumbers}}
\propexp*\label{app:prop:propexp}
\begin{proof}\label{app:proof:propexp}
\expandafter\the\csname savedproof_propexp_toks\endcsname
\end{proof}

\corroots*\label{app:cor:corroots}
\begin{proof}\label{app:proof:corroots}
\expandafter\the\csname savedproof_corroots_toks\endcsname
\end{proof}

\propauto*\label{app:prop:propauto}
\begin{proof}\label{app:proof:propauto}
\expandafter\the\csname savedproof_propauto_toks\endcsname
\end{proof}

\propdualorder*\label{app:prop:propdualorder}
\begin{proof}\label{app:proof:propdualorder}
\expandafter\the\csname savedproof_propdualorder_toks\endcsname
\end{proof}

\section{Proofs for Sec.~\ref{sec:linearalgebra}}

\propunitary*\label{app:prop:propunitary}
\begin{proof}\label{app:proof:propunitary}
\expandafter\the\csname savedproof_propunitary_toks\endcsname
\end{proof}

\propmatexp*\label{app:prop:propmatexp}
\begin{proof}\label{app:proof:propmatexp}
\expandafter\the\csname savedproof_propmatexp_toks\endcsname
\end{proof}

\propspecunit*\label{app:prop:propspecunit}
\begin{proof}\label{app:proof:propspecunit}
\expandafter\the\csname savedproof_propspecunit_toks\endcsname
\end{proof}

\prophermunit*\label{app:prop:prophermunit}
\begin{proof}\label{app:proof:prophermunit}
\expandafter\the\csname savedproof_prophermunit_toks\endcsname
\end{proof}

\propsemipos*\label{app:prop:propsemipos}
\begin{proof}\label{app:proof:propsemipos}
\expandafter\the\csname savedproof_propsemipos_toks\endcsname
\end{proof}


\fi

\section{Example for Sec.~\ref{sec:consistency}}

\begin{example}[Complex correction of a \dual measurement]\label{ex:measextcorr}
Consider the \dual measurement $\{M_m^\e\}$ with $M_m^\e = M_m + \e N_m$ given by:
\begin{multicols}{2}
\noindent 
\begin{equation}
M_0 = \ketbra 0 0
\end{equation}
\noindent 
\begin{equation}
M_1 = \ketbra 0 1
\end{equation}
\columnbreak
\noindent 
\begin{equation}
N_0 = i \pi \ketbra 0 0
\end{equation}
\noindent 
\begin{equation}
N_1 = i \pi (\frac 1 {\sqrt 3} \ketbra 1 0 + \ketbra 0 1 - \frac 2 {\sqrt 6} \ketbra 1 1)
\end{equation}
\end{multicols}

By Prop.~\ref{pr:dualstinespring} there exists $U_\e = U + \e iHU$ is a Stinespring dilation of $\{M_m^\e\}$. Let us construct such an $U_\e$. We need a conventional unitary $U$ and Hermitian $H$ such that $(\bra{m} \otimes I) U (\ket{0} \otimes I) = M_m$ and $(\bra{m} \otimes I)\, iHU\, (\ket{0} \otimes I) = N_m$.

The constraint $(\bra{m} \otimes I) U (\ket{0} \otimes I) = M_m$ is satisfied by the SWAP gate on the ancilla. In the chosen basis:
\begin{equation}
 U = \begin{pmatrix}
        1 & 0 & 0 & 0 \\
        0 & 0 & 1 & 0 \\
        0 & 1 & 0 & 0 \\
        0 & 0 & 0 & 1 \\
     \end{pmatrix}.
\end{equation}
The first two columns of $U^\dagger(iHU) = i U^\dagger H U$ are fixed by $N_m$; completeness of the dilation requires $-i U^\dagger(iHU) = U^\dagger H U$ to be Hermitian. One can complete the matrix accordingly; one solution is
\begin{equation}
        H = \pi \begin{pmatrix}
               1 & 0 & 0 & \frac{1}{\sqrt{3}} \\
               0 & 1 & 0 & 0 \\
               0 & 0 & 1 & -\frac{2}{\sqrt{6}} \\
               \frac{1}{\sqrt{3}} & 0 & -\frac{2}{\sqrt{6}} & 1 \\
            \end{pmatrix}
\end{equation}
We get $(\bra{m} \otimes I) U_\e (\ket{0} \otimes I) = M_m^\e$.

By Def.~\ref{def:measextcorr} the complex correction $\{M_m^\e\}$ is $\til{M}_m^h = (\bra{m} \otimes I)\, \til{U}_h\, (\ket{0} \otimes I)$ with $\til{U}_h = \exp(ihH)\, U$. 
\begin{equation}
        \til{U}_h = \begin{pmatrix}
               \frac{z(\cos(\pi h)+2)}{3} & \frac{\sqrt{2}\,z\,(1-\cos(\pi h))}{3} & 0 & \frac{\sqrt{3}\,(z^2-1)}{6} \\[6pt]
               0 & 0 & z & 0 \\[6pt]
               \frac{\sqrt{2}\,z\,(1-\cos(\pi h))}{3} & \frac{z(2\cos(\pi h)+1)}{3} & 0 & \frac{\sqrt{6}\,(1-z^2)}{6} \\[6pt]
               \frac{\sqrt{3}\,(z^2-1)}{6} & \frac{\sqrt{6}\,(1-z^2)}{6} & 0 & \frac{z^2+1}{2}
            \end{pmatrix}
\end{equation}
where $z=e^{i\pi h}$. We get
\begin{subequations}
        \begin{align}    
        \til{M}_0^h = \frac 1 3 \begin{pmatrix}
                z(\cos (\pi  h )+2) &  \sqrt{2}\,z\,(1-\cos(\pi h)) \\
               0 &  0
             \end{pmatrix}\\
        \til{M}_1^h = \frac 1 3 \begin{pmatrix}
                \sqrt{2}\,z\,(1-\cos(\pi h)) &  z(2\cos(\pi h)+1) \\
                i\sqrt{3}\,z\,\sin(\pi h) &  -i\sqrt{6}\,z\,\sin(\pi h)
             \end{pmatrix}
        \end{align}
\end{subequations}
Then $\{\til{M}_m^h\}$ is a valid conventional complete measurement. By Prop.~\ref{pr:measextcorr}, its effects and outcome probabilities agree up to $O(h^2)$ with those of $\{M_m^\e\}$. 
\end{example}

\end{document}